\lstdefinestyle{inlineR}{language=R,frame=none,basicstyle=\ttfamily,keywordstyle=\ttfamily,stringstyle=\ttfamily,keepspaces=true,showspaces=false,showstringspaces=false,breaklines=true,upquote=true,print,columns=fullflexible}
\newcommand{\code}{\lstinline}
  \renewenvironment{thebibliography}[1]%
  {\begin{oldthebibliography}{#1}\setlength{\parskip}{0ex}\setlength{\itemsep}{0ex}}%
  {\end{oldthebibliography}}
\appto\TPTnoteSettings{\linespread{1}\footnotesize}
\newcommand{\citeposs}[1]{\citeauthor{#1}'s (\citeyear{#1})}
\crefname{conjecture}{Conjecture}{Conjectures}
\crefname{section}{Section}{Sections}
\crefname{subsection}{Section}{Sections}
\crefname{subsubsection}{Section}{Sections}
\Crefname{conjecture}{Conjecture}{Conjectures}
\Crefname{section}{Section}{Sections}
\Crefname{subsection}{Section}{Sections}
\Crefname{subsubsection}{Section}{Sections}
\crefname{appendix}{Appendix}{Appendices}
\crefname{subappendix}{Appendix}{Appendices}
\crefname{subsubappendix}{Appendix}{Appendices}
\Crefname{appendix}{Appendix}{Appendices}
\Crefname{subappendix}{Appendix}{Appendices}
\Crefname{subsubappendix}{Appendix}{Appendices}
\crefname{equation}{}{}
\Crefname{equation}{Equation}{Equations}
\crefname{enumi}{}{}
\Crefname{enumi}{}{}
\crefname{assumption}{}{}
\Crefname{assumption}{Assumption}{Assumptions}
\Crefname{method}{Method}{Methods}
\newlist{steps}{enumerate}{1}
\setlist[steps]{label=\arabic*., ref=\arabic*, itemsep=0pt}
\crefname{stepsi}{Step}{Steps}
\Crefname{stepsi}{Step}{Steps}
\theoremstyle{plain}
\newtheorem{theorem}{Theorem}
\newtheorem{lemma}[theorem]{Lemma}
\newtheorem{corollary}[theorem]{Corollary}
\theoremstyle{definition}
\newcommand{\iid}{\overset{\mathit{iid}}{\sim}}
\newcommand{\deq}{\overset{d}{=}}
\newcommand{\dconv}{\xrightarrow{d}}
\newcommand{\MSE}{\textrm{MSE}}
\newcommand{\CP}{\textrm{CP}}
\DeclareMathOperator{\Corr}{Corr}
\newcommand{\R}{{\mathbb R}}
\DeclareMathOperator{\E}{E} 
\let\Pr\relax \DeclareMathOperator{\Pr}{P} 
\DeclareMathOperator*{\argmin}{arg\,min}
\newcommand{\CI}{\textrm{CI}}
\newcommand{\NormDist}{\mathrm{N}}
\newcommand{\diff}{\mathop{}\!\mathrm{d}} 
\newcommand{\independenT}[2]{\mathrel{\rlap{$#1#2$}\mkern2mu{#1#2}}}
\newcommand\independent{\protect\mathpalette{\protect\independenT}{\perp}} 
\providecommand{\abs}[1]{\lvert#1\rvert}
\let\originalleft\left
\let\originalright\right
\renewcommand{\left}{\mathopen{}\mathclose\bgroup\originalleft}
\renewcommand{\right}{\aftergroup\egroup\originalright}
\newcommand{\mockalph}[1]{}  
\title{Confidence intervals for intentionally biased estimators}
\author{%
David M.\ Kaplan\thanks{Department of Economics, University of Missouri, \texttt{kaplandm@missouri.edu}}
\and
Xin Liu\thanks{School of Economic Sciences, Washington State University, \texttt{xin.liu1@wsu.edu}}
}
\date{October 31, 2023}
\begin{document}

\parindent=1.5em

\maketitle

\doublespacing 

\begin{abstract}
We propose and study three confidence intervals (CIs) centered at an estimator that is intentionally biased to reduce mean squared error.
The first CI simply uses an unbiased estimator's standard error; compared to centering at the unbiased estimator, this CI has higher coverage probability for confidence levels above $91.7\%$, even if the biased and unbiased estimators have equal mean squared error.
The second CI trades some of this ``excess'' coverage for shorter length.
The third CI is centered at a convex combination of the two estimators to further reduce length.
Practically, these CIs apply broadly and are simple to compute.

\textit{Keywords}: 
bias--variance tradeoff, 
coverage probability, 
mean squared error, 
smoothing

\textit{JEL classification}: 
C13


\textit{Disclosure statement}: we (the authors) have no competing interests to declare.
\end{abstract}

\clearpage

\newcommand{\paperspacing}{\onehalfspacing}
\renewcommand{\paperspacing}{\doublespacing}
\paperspacing

\section{Introduction}
\label{sec:intro}


We propose simple confidence intervals based on estimators that intentionally introduce bias in order to reduce mean squared error (MSE).
Such estimators have a long history and continued popularity.
For example, the classic averaging/shrinkage approach of \citet{JamesStein1961} was applied to simultaneous equation models by \citet{Maasoumi1978}, which in turn inspired the recent asymptotic risk dominance results of \citet{Hansen2017} for a closely related 2SLS/OLS averaging estimator, while others have considered Stein-like GMM averaging with misspecified moments \citep{ChengLiaoShi2019,DiTraglia2016,Liu2022}.
As another example, the $L_2$ penalty of ridge regression \citep{HoerlKennard1970} has been modified to get lasso \citep{Tibshirani1996}, bridge \citep{KnightFu2000}, and SCAD \citep{FanLi2001}, with additional optimality results by \citet{Zou2006} and \citet{ChetverikovEtAl2021}, among others.
A third category includes smoothed estimators.
Even aside from nonparametric estimators that increase bias in order to minimize MSE, examples include \citeposs{Horowitz1992} smoothed version of the binary choice maximum score estimator \citep{Manski1975} and \citeposs{GroeneboomEtAl2010} smoothed version of the nonparametric maximum likelihood estimator of the current status model \citep{GroeneboomWellner1992}.
Another example is \citeposs{KaplanSun2017} smoothed estimator of instrumental variables quantile regression \citep{ChernozhukovHansen2005,ChernozhukovHansen2006} and quantile regression \citep{KoenkerBassett1978}; even stronger theoretical results for smoothed quantile regression are developed by \citet{FernandesEtAl2021} and \citet{HeEtAl2023}, among others.
As detailed in \cref{sec:setup}, although our results do not apply universally to all these examples, they apply to several, with smoothed quantile regression as the example we detail throughout the paper.


In finite samples, ignoring the bias is problematic for the default confidence interval (CI) centered at the biased estimator and using its own standard error.
Such a CI always ``undercovers,'' meaning coverage probability is below the desired confidence level.
Sometimes asymptotic arguments are made that the bias goes to zero fast enough that the coverage probability increases toward $1-\alpha$ in the limit, but this does not fully address the finite-sample reality.


In this paper, we propose two CIs that are centered at the intentionally biased (lower MSE) estimator but attain at least the desired coverage probability, and a third CI centered at a convex combination of the unbiased and biased estimators.
The first CI uses the unbiased estimator's standard error, while the second additionally uses the biased estimator's standard error in order to trade some ``excess'' coverage probability for shorter length.
Both CIs have advantages over the other common approaches at the $95\%$ confidence level.
Compared to using the biased estimator's CI mentioned above, our CIs do not suffer from undercoverage.
Compared to CIs based on bias-correction, our CIs do not require any model-specific knowledge or bias estimation.
Bias-corrected CIs work well in certain settings, but generally the bias can depend on parameters that are very challenging to estimate, like high-order derivatives of conditional densities of unobserved error terms as in (10) of \citet{KaplanSun2017} for smoothed instrumental variables quantile regression and as in Theorem 1 of \citet{FernandesEtAl2021} for smoothed quantile regression, and properly accounting for the extra uncertainty from bias estimation can make CIs longer than necessary, as noted by \citet[p.~2]{ArmstrongKolesar2020} in the context of regression discontinuity.
That is, the relative practical convenience and statistical properties of our CIs compared to bias-corrected CIs depends on the specific model and setting.
Compared to centering at the unbiased estimator, both our CIs have higher coverage probability, and our second CI has shorter length.
By using the correlation between the unbiased and intentionally biased estimators, CI length can be further reduced by basing the CI on a convex combination of the two estimators.


Our first results focus on the coverage probability (CP) of our first CI compared with the benchmark CI that is centered at the unbiased estimator in addition to using its standard error.
Our CI has the same length by construction, yet at the conventional $95\%$ confidence level, our CI has higher CP than the benchmark, even if the estimators' MSEs are equal (so the biased estimator is not actually better).
Higher CP means fewer coverage errors (CI fails to contain true value).
Although typically the goal is to minimize length subject to an upper bound on the coverage error rate, given a fixed CI length we prefer a lower error rate, or equivalently a higher CP.
%
More generally, given equal MSE, our CP is higher than the benchmark whenever the confidence level is at least $91.7\%$, regardless of the magnitude of bias, and our worst-case CP is $90.00\%$ (rounded) at the $90\%$ confidence level, too.
When the biased estimator has lower MSE, then our CP is even higher.
However, at lower confidence levels, our CI's undercoverage can be significant, and CP can be near zero for confidence levels below $68.3\%$ when the bias is large.
That is, our CI's unambiguous advantage over the benchmark CI at conventional confidence levels is not trivial because it disappears at lower confidence levels.

The results for our second CI show how it can further achieve shorter length than the benchmark.
Essentially, it calibrates the critical value to achieve exact coverage probability if the two estimators have the same MSE, given the ratio of the two estimators' standard errors.
Equivalently, ``same MSE'' can be interpreted as using the upper bound for bias like in \citet{ArmstrongKolesar2021}, who in turn build upon the fixed-length CI results of \citet{Donoho1994}.
From our first results, at confidence level $95\%$, we know that if the biased estimator's standard error is strictly smaller than the unbiased estimator's standard error, then our first CI has CP strictly above $95\%$, even in the worst case of equal MSE.
Thus, we can shorten the CI's length and still achieve at least $95\%$ CP, and even higher if the biased estimator's MSE is strictly below the unbiased estimator's MSE.
This is what our second CI does.
Compared to our first CI, one practical downside is the additional reliance on the biased estimator's standard error, but if such an estimate is readily available, then we recommend our second CI in practice because it has correct coverage while being more precise.

If the setting is strengthened to joint normality with know correlation in addition to known variances, then this second CI can be further shortened by using a convex combination of the unbiased and biased estimators.
Our second CI is the special case with all weight on the biased estimator; by searching over all possible convex combinations, we can find the shortest possible CI.
The benefit of this is highest when the correlation is lower and the original unbiased and biased estimators' variances are similar.

The work most similar to ours is from \citet{ArmstrongKolesar2021}.
The limiting experiment in their (13), which is an asymptotic version of the setting in their Section 2, is similar to our strongest setting with joint normality and known correlation.
Indeed, our third CI based on a linear combination of the unbiased and biased estimators is inspired by their use of a linear estimator.
However, our work also differs from theirs in several ways.
First, our first CI relies only on the unbiased estimator's standard error, which is often easy and fast to compute, and even our second CI does not require knowledge of the correlation.
Second, our bias bound is implied by the intentionally biased estimator's lower MSE, rather than require separate knowledge.
Third, our setting always has an unbiased estimator (and thus valid CI) available; we show how at conventional confidence levels we can improve upon this, even though at other confidence levels this is not generally true.
Of course, \citet{ArmstrongKolesar2021} provide results that apply to general settings different than ours, like generalized method of moments.

After describing the setup in \cref{sec:setup}, in \cref{sec:equal-MSE} we characterize properties of our first proposed CI in the case when the biased estimator's MSE is identical to the unbiased estimator's MSE.
We consider CP as a function of bias as well as the nominal confidence level.
We characterize the cases for which our CI has higher CP than the CI centered at the unbiased estimator.
Then in \cref{sec:lower-MSE} we show that if the biased estimator has strictly lower MSE, the CP becomes even higher, specifically if we reduce the bias while keeping the variance fixed.
\Cref{sec:bias-bound} describes our second CI and establishes its properties.
\Cref{sec:lincom} extends the second CI to an optimal convex combination of the unbiased and biased estimators.
\Cref{sec:sim,sec:emp} contain simulation and empirical results.

\paragraph{Notation and abbreviations}
For notation, 
$\NormDist(\mu , \sigma^2)$ is the normal distribution with mean $\mu$ and variance $\sigma^2$, and $\Phi(\cdot)$ and $\phi(\cdot)$ are respectively the CDF and PDF of the standard normal distribution $\NormDist(0,1)$, whose $p$-quantile is denoted $z_p$; the sine, cosine, tangent, and secant functions are respectively $\sin(\cdot)$, $\cos(\cdot)$, $\tan(\cdot)$, and $\sec(\cdot)$.
Acronyms used include those for 
confidence interval (CI), 
coverage probability (CP), 
cumulative distribution function (CDF),
generalized method of moments (GMM),
mean squared error (MSE),
ordinary least squares (OLS),
smoothed quantile regression (SQR),
probability density function (PDF),
quantile regression (QR),
smoothed quantile regression (SQR),
smoothly clipped absoluted deviation (SCAD),
and
two-stage least squares (2SLS).

\section{Setup}
\label{sec:setup}

We describe the estimators and their sampling distributions in \cref{sec:setup-dist}, some possible confidence intervals in \cref{sec:setup-CIs}, and how well our setting applies to specific estimators in \cref{sec:setup-apps}.

\subsection{Estimators and distributions}
\label{sec:setup-dist}

Given a scalar parameter $\theta$, consider the two estimators
\begin{equation}
\label{eqn:theta-hat-dists}
\begin{split}
&\hat\theta_1\sim\NormDist(\theta, s_1^2), \\
&\hat\theta_2\sim\NormDist(\theta+b_2, s_2^2)
,
\end{split}
\end{equation}
where $s_1$ and $s_2$ are the respective standard errors and $b_2$ is the bias of $\hat\theta_2$.
Estimator $\hat{\theta}_2$ is intentionally biased to reduce MSE:
\begin{equation}
\label{eqn:lower-MSE}
\MSE(\hat\theta_2) = b_2^2+s_2^2
\le s_1^2 = \MSE(\hat\theta_1) .
\end{equation}

The normal distributions in \cref{eqn:theta-hat-dists} represent an asymptotic approximation.
Our \cref{eqn:theta-hat-dists} is very similar to settings used in other papers; for example, it is a weaker version of the main example setting on page 7 of \citet{ArmstrongEtAl2023}, who instead study optimal estimation and more strongly assume joint normality with known covariance, and it is similar to the limiting experiment in (13) of \citet{ArmstrongKolesar2021}.
The goal of \cref{eqn:theta-hat-dists} is simply to capture bias in a meaningful way, as opposed to asymptotic approximations in which the bias completely disappears.
For example, given some $r>0$ and $q>0$, imagine $n^r(\hat\theta_2-\theta-n^{-q}\tilde{b}_2)\dconv\NormDist(0,\tilde{s}_2^2)$, so approximately $\hat\theta_2\sim\NormDist(\theta+n^{-q}\tilde{b}_2, \tilde{s}_2^2/n^{2r})$.
Given a value of sample size $n$, our \cref{eqn:theta-hat-dists} uses $b_2=n^{-q}\tilde{b}_2$ and $s_2^2=\tilde{s}_2^2/n^{2r}$.
Some papers argue that asymptotically we can ignore the bias if $q>r$ so that the order $n^{-q}$ bias is of smaller order of magnitude than the $n^{-r}$ standard deviation, but we want capture the effect of bias that can be important in finite samples, while also allowing zero bias ($b_2=0$) as a special case.
For $\hat\theta_1$, the finite-sample bias does not need to be exactly zero as long as \cref{eqn:theta-hat-dists} provides a good approximation, meaning the bias of $\hat\theta_1$ is negligible compared to $b_2$, $s_1$, and $s_2$, like for the quantile regression estimator $\hat\theta_1$ in our simulation (\cref{sec:sim}).
If $\hat\theta_2$ is a smoothed estimator, then $\hat\theta_1$ could be a highly under-smoothed estimator, although optimal bandwidths in such cases are beyond our scope.

The scalar parameter $\theta$ can be a summary of an underlying vector-valued or function-valued parameter as long as \cref{eqn:theta-hat-dists} holds, but confidence sets or bands for non-scalar $\theta$ are beyond our scope.

In \cref{eqn:theta-hat-dists}, we treat some parameters as known and some as unknown.
Viewing \cref{eqn:theta-hat-dists} as an asymptotic approximation, ``known'' means ``can be estimated consistently.''
We always assume the unbiased estimator's standard error $s_1$ is known and that the biased estimator $\hat\theta_2$ can be computed.
The first CI we propose does not require any further knowledge.
Our CI in \cref{sec:bias-bound} further requires $s_2$, and our CI in \cref{sec:lincom} requires further strengthening \cref{eqn:theta-hat-dists} to joint normality with known correlation.
Although analytic forms of $s_2$ (or $s_1$) may be difficult to estimate, often bootstrap or subsampling \citep{PolitisRomano1994a} can be used.
We never require $b_2$ because estimating the bias is often infeasible or unreliable.

\subsection{Confidence intervals}
\label{sec:setup-CIs}

With confidence level $1-\alpha$, consider the following two-sided CIs:
\begin{equation}
\label{eqn:CI}
\begin{split}
\CI_1 &\equiv \hat\theta_1 \pm z_{1-\alpha/2}s_1 ,\\
\CI_2 &\equiv \hat\theta_2 \pm z_{1-\alpha/2}s_1 ,\\
\CI_3 &\equiv \hat\theta_1 \pm z_{1-\alpha/2}s_2 ,\\
\CI_4 &\equiv \hat\theta_2 \pm z_{1-\alpha/2}s_2 ,
\end{split}
\end{equation}
where $z_{1-\alpha/2}$ is the $(1-\alpha/2)$-quantile of the standard normal distribution.
The benchmark $\CI_1$ is the usual CI using only the unbiased estimator.
We focus on $\CI_2$, which has not been proposed or studied previously.
We include $\CI_3$ only for completeness; it has not been proposed and is not good.
As noted earlier, $\CI_4$ has been proposed and justified by assuming the bias is asymptotically negligible.

Among the CIs in \cref{eqn:CI}, we propose using $\CI_2$.
To the best of our knowledge, this has not been proposed before, perhaps due to the counterintuitive mixing of one estimator for centering and another estimator's standard error.
Our results characterize the coverage probability advantage of using $\CI_2$ over the benchmark $\CI_1$.

Unlike $\CI_1$ and $\CI_2$, both $\CI_3$ and $\CI_4$ always undercover.
The $\CI_3$ is centered at $\hat\theta_1$ like $\CI_1$ but is shorter because $s_2<s_1$; because $\CI_1$ has exact $1-\alpha$ coverage probability, $\CI_3$ has less than $1-\alpha$ CP.
Also, $\CI_4$ covers $\E(\hat{\theta}_2)$ with probability $1-\alpha$, but $\E(\hat{\theta}_2)=\theta+b_2\ne\theta$, so it has less than $1-\alpha$ CP for $\theta$.

Additionally, in \cref{sec:bias-bound,sec:lincom}, we propose CIs that convert some of the excess coverage probability of $\CI_2$ into shorter length.

\subsection{Applications}
\label{sec:setup-apps}

Our setting is a reasonable approximation for certain MSE-reducing estimators but not others, as we describe below.

Our original research question was simply: is $\CI_2$ valid for smoothed instrumental variables quantile regression \citep{KaplanSun2017,sivqr}?
In that context, $\hat\theta_2$ is generally easier and faster to compute than $\hat\theta_1$, in addition to having lower MSE \citep[][Thm.~7]{KaplanSun2017}, although $s_1$ is also easy to compute \citep[e.g.,][Rmk.~4]{ChernozhukovHansen2006}.
Both $\hat\theta_1$ and $\hat\theta_2$ are asymptotically normal.%
\footnote{%
There are multiple algorithms for computing $\hat\theta_1$, but they all aim to solve for the same unsmoothed solution $\hat\theta_1$ (up to some smaller-order terms) and are consequently all asymptotically normal; for example, see \citet[Rmk.~3]{ChernozhukovHansen2006}, \citet[eqn.~(13)]{ChenLee2018}, and \citet[Thm.~1, Cor.~2]{KaidoWuthrich2021}.
}
Our results suggest that for confidence levels above around $90\%$, $\CI_2$ is not only valid but even better than $\CI_1$.
Further, our $\CI_5$ (\cref{sec:bias-bound}) can provide an even shorter valid CI, and $\CI_6$ (\cref{sec:lincom}) yet shorter.

Our results also apply well to the related setting of smoothed quantile regression (QR).
There too, smoothing the indicator function in the moment conditions (estimating equations) introduces bias but reduces variance, resulting in an overall MSE reduction.
Both the unsmoothed QR estimator and the smoothed QR estimator are asymptotically normal; for example, see \citet[][Thm.~4.2]{KoenkerBassett1978}, \citet[][Thm.~3]{AngristEtAl2006}, \citet[][Thms.~1 and 5(S)]{FernandesEtAl2021}, and \citet[][Thm.~4.3]{HeEtAl2023}.
Asymptotic MSE-optimal smoothing bandwidths are given by \citet[][Prop.~2 and \S5]{KaplanSun2017}, \citet[][Thm.~4]{FernandesEtAl2021}, and \citet[][Rmk.~4.3]{HeEtAl2023}.
Although the difference is generally smaller than with instrumental variables QR, there can be settings (large sample size and/or number of regressors) where the unsmoothed QR estimator is significantly slower than smoothed QR, in which case $\CI_1$ is not convenient; for example, see Figure 1(b) of \citet{HeEtAl2023}.
Our simulations in \cref{sec:sim} illustrate the finite-sample benefits of our proposed CIs in the context of smoothed QR.

With some caution, our results can be applied to other smoothed estimators even when the original unsmoothed estimator is not asymptotically normal.
For example, consider the maximum score estimator \citep{Manski1975} or the nonparametric maximum likelihood estimator in the ``current status'' failure time model \citep{GroeneboomWellner1992}, neither of which is asymptotically normal.
Smoothing attains asymptotic normality in each case.
Further, compared to using a very small smoothing bandwidth that incurs very little bias, increasing the bandwidth increases bias while reducing MSE, up to a point.
In such cases, our $\hat\theta_1$ is a highly under-smoothed (small bandwidth, small bias) estimator, while our $\hat\theta_2$ is the estimator with the (approximate) MSE-optimal bandwidth.
For example, from the results of \citet{Horowitz1992}, Theorem 2(c) provides the asymptotic MSE-optimal bandwidth, which results in the asymptotically normal (but biased) distribution in Theorem 2(b) represented by our $\hat\theta_2$, while an undersmoothed bandwidth sets his $\lambda=0$ and removes the bias in Theorem 2(b), thus serving reasonably as our $\hat\theta_1$.
Similarly, from \citet{GroeneboomEtAl2010}, Theorem 3.5 (or Thm.~3.6 or Cor.~3.7, or Thm.~4.2, Thm.~4.3, or Cor.~4.4) provides the (biased) asymptotically normal distribution with the asymptotic MSE-optimal bandwidth rate for our $\hat\theta_2$, while undersmoothing with bandwidth going to zero faster than $n^{-1/5}$ (or $n^{-1/7}$) reduces the order of bias more and can serve as our $\hat\theta_1$.

With yet more caution, our results can apply to other estimators that reduce MSE in many but not all cases.
For example, using a fixed data-generating process (pointwise asymptotics), the adaptive lasso and SCAD are asymptotically normal and as efficient as the infeasible ``oracle'' estimator knowing the true model (hence lower MSE than the corresponding unpenalized estimator); see Theorems 2 and 4 of \citet{Zou2006} and Theorem 2 of \citet{FanLi2001}.
However, among others, \citet{LeebPotscher2008b} show that such oracle arguments do not hold uniformly (i.e., under all drifting sequences of data-generating processes), which in finite samples translates to portions of the parameter space where MSE is not lower than ordinary least squares; see their Theorem 2.1 and Section 3.
More generally, there may be biased estimators that reduce MSE under certain conditions but not others.
In practice, if such conditions seem plausible enough to use the intentionally biased estimator, then it seems reasonable to construct CIs under the same conditions.
In such cases, while keeping $\hat\theta_1$ and $\CI_1$ as a robustness check, the main results could show $\hat\theta_2$ with one of our proposed CIs.

Finally, our results are suggestive for Stein-like averaging estimators but leave gaps to fill in future work.
Many modern averaging estimators like \citet{Hansen2017} or \citet{ChengLiaoShi2019} take a weighted average of a conservative estimator (like our $\hat\theta_1$) and an aggressive estimator that is biased but usually has lower variance (though not necessarily lower MSE).
Usually the two estimators are jointly asymptotically normal.
The infeasible oracle averaging estimator using the infeasible optimal weight (which is usually a constant) is thus itself asymptotically normal and fits our assumptions about $\hat\theta_2$.
However, unlike in other contexts, the weight cannot be estimated consistently, so it is random even asymptotically.
A random-weighted average of two normals is not normal, so this does not fit our $\hat\theta_2$; for example, see (9) of \citet{Hansen2017} or Lemma 4.2(a) of \citet{ChengLiaoShi2019}.
For weights near zero (or one), the resulting weighted average is nearly normal, and our conclusions may apply more generally if the distribution is ``close enough'' to normal (as seen in simulations like Figure 2 of \citealp{Liu2015} for the plug-in averaging estimator), but formally extending our results to the general case of averaging estimators remains an important gap for future work to fill.

\section{Coverage probability comparison: equal MSE}
\label{sec:equal-MSE}

This section studies the CP of our proposed $\CI_2$ when the biased estimator's MSE equals the unbiased estimator's MSE,
\begin{equation}
\label{eqn:equal-MSE}
\MSE(\hat\theta_2) = b_2^2+s_2^2
= s_1^2 = \MSE(\hat\theta_1) .
\end{equation}

For intuition, consider the extreme case with the maximum possible bias satisfying \cref{eqn:equal-MSE}, $b_2=s_1$ and $s_2=0$.
Note $\CI_3$ and $\CI_4$ both have length zero and thus $0\%$ CP.
Because $s_2=0$, $\hat\theta_2=\theta+b_2=\theta+s_1$ is non-random, so $\CI_2$ simplifies to 
\begin{equation*}
\CI_2=\overbrace{\theta+s_1}^{\hat\theta_2}
\pm z_{1-\alpha/2} s_1 .
\end{equation*}
This $\CI_2$ is also non-random and thus has either $0\%$ or $100\%$ CP.
Specifically, $\CI_2$ includes the true $\theta$ if and only if $s_1-z_{1-\alpha/2}s_1\le0$, which simplifies to $z_{1-\alpha/2}\ge1$ or approximately $1-\alpha\ge0.683$.
Thus, in this special case, $\CI_2$ is better than $\CI_1$ if the confidence level is above $68.3\%$, but worse if below $68.3\%$.
We will more generally characterize when $\CI_2$ is better than $\CI_1$, in terms of both the bias $b_2$ and the confidence level $1-\alpha$.

More generally, to satisfy the equal MSE in \cref{eqn:equal-MSE}, let
\begin{equation}
\label{eqn:sin-cos}
b_2 = s_1 \sin(t)
,\quad
s_2=s_1 \cos(t)
,\quad -\pi/2 \le t \le \pi/2 ,
\end{equation}
where $\sin(\cdot)$ and $\cos(\cdot)$ are the sine and cosine functions.
Intuitively, $t$ is a transformed version of the bias: $b_2$ is increasing in $t$, with $b_2=0$ when $t=0$, up to $b_2=s_1$ when $t=\pi/2$.
The standard deviation $s_2$ moves in the opposite direction, decreasing in $t$ from $s_2=s_1$ at $t=0$ down to $s_2=0$ at $t=\pi/2$.
The earlier special case implicitly set $t=\pi/2$ to get $s_2=0$ and $b_2=s_1$.

The coverage probability of $\CI_2$ depends on $t$ and the critical value $z$:
\begin{align}
\notag
\Pr&(\hat\theta_2-zs_1
     \le\theta\le
     \hat\theta_2+zs_1)
\\&= \notag
\Pr(\hat\theta_2-zs_1\le\theta)
- \Pr(\theta\ge\hat\theta_2+zs_1)
\\&= \notag
\Pr(\hat\theta_2\le\theta+zs_1)
- \Pr(\hat\theta_2\le\theta-zs_1)
\\&= \notag
\Pr\biggl(\overbrace{\frac{\hat\theta_2-\theta-b_2}{s_2}}^{\sim\NormDist(0,1)\textrm{ by \cref{eqn:theta-hat-dists}}}
    \le \frac{\theta+zs_1-\theta-b_2}{s_2}\biggr)
-
\Pr\biggl(\overbrace{\frac{\hat\theta_2-\theta-b_2}{s_2}}^{\sim\NormDist(0,1)\textrm{ by \cref{eqn:theta-hat-dists}}} 
     \le \frac{\theta-zs_1-\theta-b_2}{s_2}\biggr)
\\&= \label{eqn:CI2-CP-b2} 
 \Phi\biggl(\frac{ zs_1-b_2}{s_2}\biggr)
-\Phi\biggl(\frac{-zs_1-b_2}{s_2}\biggr)
\\&= \notag
 \Phi\biggl(\frac{ zs_1-s_1\sin(t)}{s_1\cos(t)}\biggr)
-\Phi\biggl(\frac{-zs_1-s_1\sin(t)}{s_1\cos(t)}\biggr)
\\&= \notag
 \Phi\biggl( \frac{z}{\cos(t)}-\tan(t)\biggr)
-\Phi\biggl(-\frac{z}{\cos(t)}-\tan(t)\biggr)
\\&= \label{eqn:CI2-CP-t}
 \Phi(z\sec(t)-\tan(t))
-\Phi(-z\sec(t)-\tan(t))
\equiv
\CP(t,z) ,
\end{align}
where $\Phi(\cdot)$ is the standard normal CDF, $\tan(\cdot)$ is the tangent function, and $\sec(\cdot)=1/\cos(\cdot)$ is the secant function.
Note $\CP(\cdot,z)$ is an even function because
\begin{align}
\notag
\CP(-t,z)
  &\equiv
 \Phi(z\sec(-t)-\tan(-t))
-\Phi(-z\sec(-t)-\tan(-t))
\\&= \notag
 \Phi(z\sec(t)+\tan(t))
-\Phi(-z\sec(t)+\tan(t))
\\&= \notag
1-\Phi(-z\sec(t)-\tan(t))
-[1-\Phi(z\sec(t)-\tan(t))]
\\&= \notag
 \Phi(z\sec(t)-\tan(t))
-\Phi(-z\sec(t)-\tan(t))
\\&
\equiv \notag
\CP(t,z)
,
\end{align}
using the facts that $\sec(-t)=\sec(t)$, $\tan(-t)=-\tan(t)$, and $\Phi(x)=1-\Phi(-x)$.
Thus, we let $t>0$ without loss of generality.

\begin{figure}[htbp]
\centering
\includegraphics%
{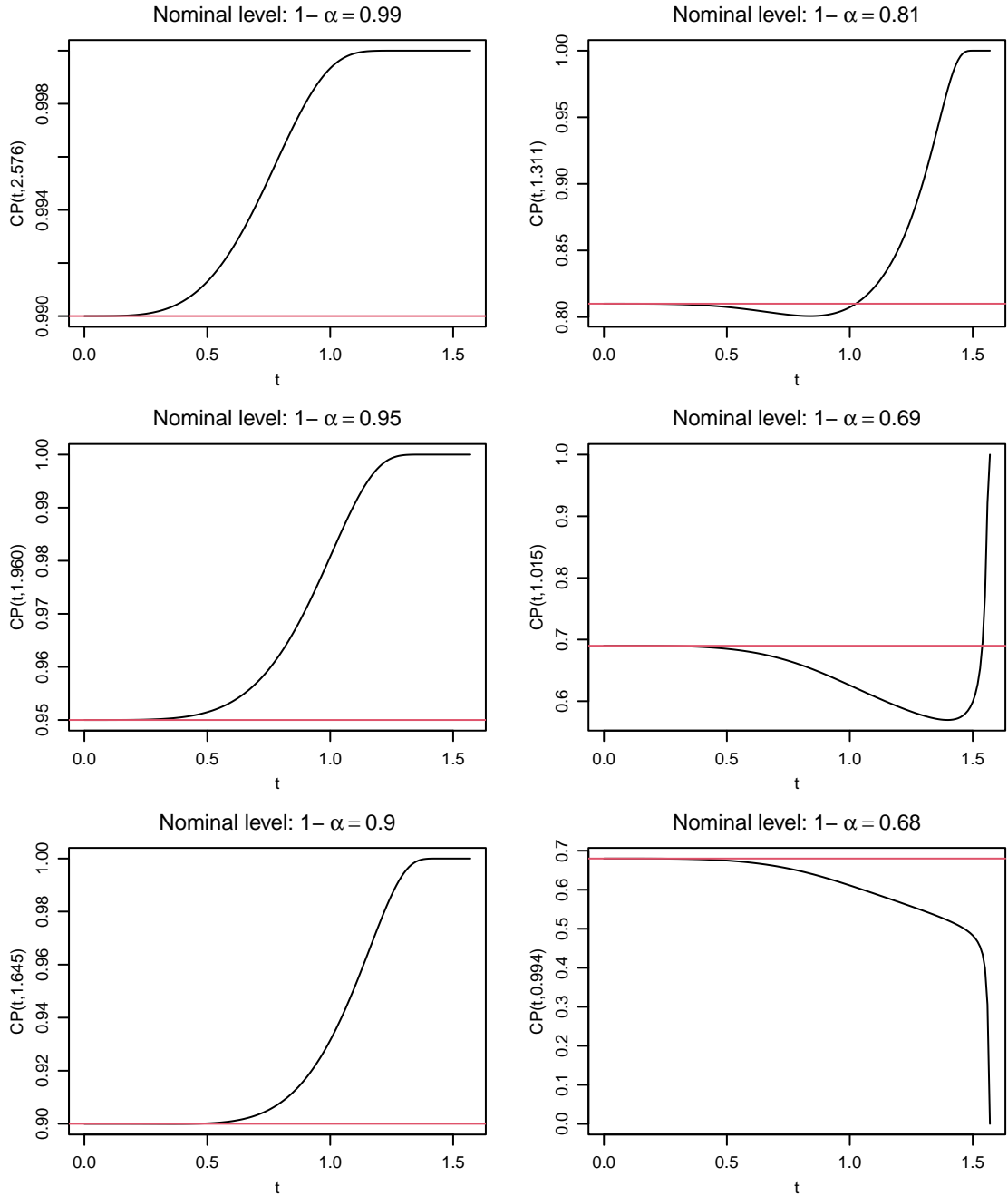}
\caption{\label{fig:cp}%
Coverage probability of $\CI_2$ at various nominal levels.}
\end{figure}

\Cref{fig:cp} shows $\CI_2$'s coverage probability as a function of $t$, for several fixed $z$.
Specifically, each plot shows $\CP(t, z_{1-\alpha/2})$ over $0\le t\le\pi/2$, with the different plots showing confidence levels $1-\alpha=0.99,0.95,0.90,0.81,0.69,0.68$.
The horizontal line in each plot shows $1-\alpha$, which is the coverage probability of the benchmark $\CI_1$, as well as the value of $\CP(0,z_{1-\alpha/2})$ because $t=0$ means $b_2=0$ and $s_2=s_1$, so $\hat\theta_2\deq\hat\theta_1$.
At the usual confidence level $95\%$ (and $99\%$), the coverage probability of $\CI_2$ is always higher than the nominal $1-\alpha$, which is also the CP of $\CI_1$.
At confidence level $90\%$, $\CI_2$ has a worst-case $0.89995$ CP, technically below the nominal $1-\alpha$ but visually and practically indistinguishable; at other $t$, CP exceeds the nominal $1-\alpha=0.9$.
Numerical analysis finds that $1-\alpha\ge0.917$ is the threshold for $\CI_2$ having at least $1-\alpha$ CP uniformly over $t\in[0,\pi/2]$.
At uncommon confidence level $81\%$, the worst-case CP is $0.8007$, and CP exceeds $1-\alpha$ for larger $t$ closer to $\pi/2$.
At the unusual confidence levels $69\%$ and $68\%$, there can be severe undercoverage, especially for $68\%$ because $z_{1-\alpha/2}<1$; this reflects the results from the earlier special case with $t=\pi/2$, in which CP was $100\%$ if $z_{1-\alpha/2}\ge1$ (confidence level above $68.3\%$) but $0\%$ if $z_{1-\alpha/2}<1$ (confidence level below $68.3\%$).
In sum, $\CI_2$ is uniformly better than $\CI_1$ for confidence levels $95\%$ and $99\%$, and practically uniformly better at confidence level $90\%$, but can be much worse with unusually low confidence levels.

\Cref{res:equal-MSE} summarizes these results.

\begin{theorem}
\label{res:equal-MSE}
Consider the setting of \cref{eqn:theta-hat-dists,eqn:CI,eqn:equal-MSE}.
The coverage probability of $\CI_1$ equals the confidence level $1-\alpha$.
For $1-\alpha\ge0.917$, $\CI_2$ has coverage probability strictly above $1-\alpha$ when $b_2\ne0$.
For $1-\alpha=0.9$ and any $b_2$, $\CI_2$ has coverage probability of at least $90.00\%$ (rounded), increasing to $100\%$ as $\abs{b_2}\to s_1$.
\end{theorem}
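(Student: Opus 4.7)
Claim (i) is immediate: $(\hat\theta_1-\theta)/s_1\sim\NormDist(0,1)$ by \cref{eqn:theta-hat-dists}, so $\Pr(\theta\in\CI_1)=\Pr(\abs{\hat\theta_1-\theta}\le z_{1-\alpha/2}s_1)=1-\alpha$ exactly. For (ii) and (iii), I would work directly with the closed-form coverage $\CP(t,z)=\Phi(z\sec t-\tan t)-\Phi(-z\sec t-\tan t)$ from \cref{eqn:CI2-CP-t}, take $z\equiv z_{1-\alpha/2}$, and, using the evenness shown in the excerpt together with the parametrization \cref{eqn:sin-cos} that exhausts all $(b_2,s_2)$ satisfying \cref{eqn:equal-MSE}, restrict to $t\in[0,\pi/2]$. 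Two boundary values anchor the analysis: $\CP(0,z)=2\Phi(z)-1=1-\alpha$, and, rewriting $\CP(t,z)=\Phi((z-\sin t)/\cos t)-\Phi((-z-\sin t)/\cos t)$ and sending $\cos t\downarrow 0^+$,
\begin{equation*}
\lim_{t\uparrow\pi/2}\CP(t,z)=
\begin{cases}1,&z>1,\\ 1/2,&z=1,\\ 0,&z<1,\end{cases}
\end{equation*}
so in every case relevant to (ii)--(iii) (all have $z>1$), $\CP\to 1$ at the right endpoint. Consequently the global infimum of $\CP(\cdot,z)$ over $[0,\pi/2]$ is attained either at $t=0$ (value $1-\alpha$) or at an interior stationary point.

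To control the interior, I would differentiate \cref{eqn:CI2-CP-t}, use $\phi'(x)=-x\phi(x)$ and the identity $\phi(z\sec t-\tan t)/\phi(z\sec t+\tan t)=\exp(2z\sec t\,\tan t)$ (a consequence of $(z\sec t+\tan t)^2-(z\sec t-\tan t)^2=4z\sec t\,\tan t$) to reduce $\partial\CP/\partial t=0$ to a single transcendental condition. Writing $A(t)=(z-\sin t)/\cos t$, $B(t)=(-z-\sin t)/\cos t$ so that $\CP(t,z)=\Phi(A(t))-\Phi(B(t))$, evenness forces $\CP'(0)=0$; direct computation using $A'(0)=B'(0)=-1$ and $A''(0)=-B''(0)=z$ additionally yields $\CP''(0)=0$, so the Taylor expansion of $\CP$ about $t=0$ begins at order $t^4$, and the local question of whether $\CP$ rises above or dips below $1-\alpha$ is governed by the sign of $\CP^{(4)}(0)$, which as a function of $z$ flips at a single threshold $z^\star$. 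Combining this local analysis with the endpoint values, I would argue that $m(z):=\min_{t\in[0,\pi/2]}\CP(t,z)$ is continuous in $z$, equals $1-\alpha$ at the unique threshold $z^\star$, and strictly exceeds $1-\alpha$ for $z>z^\star$.

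Numerical root-finding on $m(z^\star)=2\Phi(z^\star)-1$ delivers $1-\alpha(z^\star)\approx 0.917$, which gives (ii) with strict inequality on $(0,\pi/2]$ (the only $t\in[0,\pi/2]$ where $\CP$ equals $1-\alpha$ is $t=0$, and the right endpoint contributes $\CP=1$). For (iii), numerically minimizing $\CP(\cdot,z_{0.95})$ on $[0,\pi/2]$ places the global minimum near $t\approx 0.35$ at $\CP\approx 0.89996$, which rounds to $90.00\%$, while $\CP(t,z_{0.95})\to 1$ as $\abs{b_2}\to s_1$ (equivalently $t\uparrow\pi/2$) by the boundary limit above. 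The principal obstacle is that $z^\star$---and equivalently the number $0.917$---has no closed form, so fully rigorous justification of both (ii) and (iii) requires certified numerics rather than a symbolic argument; I would combine a fine grid evaluation of $\CP$ with an explicit Lipschitz bound on $\partial\CP/\partial t$ over $[0,\pi/2-\epsilon]$, together with the closed-form tail on $[\pi/2-\epsilon,\pi/2]$, to produce verified enclosures of the threshold and of the $90\%$ minimum.
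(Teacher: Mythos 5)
Your proposal is correct and follows essentially the same route as the paper: both derive the closed-form coverage $\CP(t,z)=\Phi(z\sec t-\tan t)-\Phi(-z\sec t-\tan t)$ via the parameterization in \cref{eqn:sin-cos}, use evenness to restrict to $t\in[0,\pi/2]$, and then establish the $0.917$ threshold and the $0.899953$ worst case for $1-\alpha=0.9$ by numerical study. The extra analytical scaffolding you supply (the endpoint limits at $t\uparrow\pi/2$, the vanishing of $\CP''(0)$, and the certified-numerics caveat) is sound and somewhat more explicit than the paper's one-line appeal to ``numerical study,'' but it is the same argument in substance.
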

\begin{proof}
The result for $\CI_1$ is well known.
The coverage probability of $\CI_2$ is derived in \cref{eqn:CI2-CP-t} as $\Phi(z\sec(t)-\tan(t))-\Phi(-z\sec(t)-\tan(t))$, in terms of the parameterization in \cref{eqn:sin-cos}.
Numerical study of this function leads to the stated results, including the worst-case CP with $1-\alpha=0.9$ being $0.899953$, which rounds to $0.9000$.
\end{proof}

\section{Coverage probability comparison: lower MSE}
\label{sec:lower-MSE}

Our finding that $\CI_2$ is better than $\CI_1$ for the most common confidence levels extends from the special case of $\MSE(\hat\theta_2)=\MSE(\hat\theta_1)$ to the general case of $\MSE(\hat\theta_2) \le \MSE(\hat\theta_1)$.
If $\hat\theta_2$ has strictly lower MSE than $\hat\theta_1$, then we can imagine a hypothetical estimator $\hat\theta_B$ ($B$ for ``bound'') with the same variance but higher bias than $\hat\theta_2$, such that $\hat\theta_B$ has the same MSE as $\hat\theta_1$.
As shown in \cref{res:lower-bias-higher-CP}, adding bias like this decreases CP, yet from \cref{res:equal-MSE} we know the CI centered at $\hat\theta_B$ still has higher CP than the benchmark at conventional confidence levels.
Thus, allowing $\hat\theta_2$ to have lower MSE than $\hat\theta_1$ further strengthens the advantages of our proposed $\CI_2$ compared to the benchmark $\CI_1$.

\begin{corollary}
\label{res:lower-MSE}
In \cref{res:equal-MSE}, the same results hold after relaxing the equal-MSE condition in \cref{eqn:equal-MSE} to the weakly-lower-MSE condition in \cref{eqn:lower-MSE}.
\end{corollary}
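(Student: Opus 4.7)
The strategy, already sketched in the paragraph immediately preceding the corollary, is to reduce the weakly-lower-MSE case to the equal-MSE case of \cref{res:equal-MSE} by sandwiching $\hat\theta_2$ between the benchmark and a more biased hypothetical estimator, then invoke monotonicity of CP in $\abs{b_2}$. Given $\hat\theta_2\sim\NormDist(\theta+b_2,s_2^2)$ with $b_2^2+s_2^2\le s_1^2$, I would introduce a hypothetical estimator $\hat\theta_B\sim\NormDist(\theta+b_B,s_2^2)$ with the same variance but inflated bias $b_B\equiv\sgn(b_2)\sqrt{s_1^2-s_2^2}$, so that $\abs{b_B}\ge\abs{b_2}$ and $b_B^2+s_2^2=s_1^2$. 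Applying \cref{res:equal-MSE} to the pair $(\hat\theta_1,\hat\theta_B)$ shows that the hypothetical CI $\hat\theta_B\pm z_{1-\alpha/2}s_1$ already attains every CP bound stated in the theorem.

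The key remaining step is to show that, with $s_1,s_2,z$ held fixed, the CP of $\hat\theta_2\pm zs_1$ is strictly decreasing in $\abs{b_2}$, so that $\CI_2$ (whose bias magnitude is $\abs{b_2}\le\abs{b_B}$) has at least the CP of the hypothetical CI. This follows directly from the closed form in \cref{eqn:CI2-CP-b2}: differentiating with respect to $b_2$ yields
\begin{equation*}
\frac{\partial}{\partial b_2}\CP
= s_2^{-1}\bigl[\phi((zs_1+b_2)/s_2)-\phi((zs_1-b_2)/s_2)\bigr],
\end{equation*}
which is negative for $b_2>0$ because $\abs{zs_1-b_2}<zs_1+b_2$ whenever $b_2,z>0$ and $\phi$ is even and strictly decreasing on $[0,\infty)$. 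Combined with the evenness of CP in $b_2$ already derived in \cref{sec:equal-MSE}, this gives strict monotonicity in $\abs{b_2}\ge 0$.

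Combining the two steps yields every bound of \cref{res:equal-MSE} for $\CI_2$, with strictness at $b_2\ne 0$ preserved because $b_2\ne 0$ forces $\abs{b_B}>0$. For the $90\%$ limit claim, the constraint $b_2^2+s_2^2\le s_1^2$ forces $s_2^2\to 0$ as $\abs{b_2}\to s_1$, hence $\abs{b_B}\to s_1$, and the extreme-bias analysis already given in the text (case $t\to\pi/2$ with $z_{1-\alpha/2}>1$) drives the hypothetical CP, and thus the CP of $\CI_2$, to $100\%$. The monotonicity derivative is the only new computation required, and it is routine; the substantive content is already carried by \cref{res:equal-MSE}, so I foresee no serious obstacle.
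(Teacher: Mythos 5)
Your proof is correct and follows essentially the same route as the paper: introduce the hypothetical equal-MSE estimator $\hat\theta_B$ with the same variance $s_2^2$ and inflated bias $b_B=\pm\sqrt{s_1^2-s_2^2}$, apply \cref{res:equal-MSE} to the CI centered at $\hat\theta_B$, and conclude via monotonicity of coverage probability in $\abs{b_2}$. The only (immaterial) difference is that you establish the monotonicity by differentiating the closed form in \cref{eqn:CI2-CP-b2}, whereas the paper packages it as \cref{res:lower-bias-higher-CP} and proves it by an integral comparison of normal densities (\cref{res:shift-in}); your derivative computation is correct.
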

\begin{proof}
Consider the hypothetical estimator $\hat\theta_B$ ($B$ for ``bound'') with the same variance as $\hat\theta_2$ but larger magnitude bias $b_B$ such that its MSE equals that of the unbiased $\hat\theta_1$: $b_B=\pm\sqrt{s_1^2-s_2^2}$.
From \cref{res:lower-bias-higher-CP}, the CP of $\CI_2$ is at least that of the CI $\hat\theta_B\pm z_{1-\alpha/2}s_1$, to which the results of \cref{res:equal-MSE} apply.
\end{proof}

\begin{lemma}
\label{res:lower-bias-higher-CP}
Let $\hat\theta_B\sim\NormDist(\theta+b_B,s_2^2)$ and $\hat\theta_2\sim\NormDist(\theta+b_2,s_2^2)$ with $\abs{b_2}<\abs{b_B}$.
For any $s_1>0$ and $z>0$, $\CI_2$ ($\hat\theta_2\pm z s_1$) has higher coverage probability than $\hat\theta_B\pm z s_1$:
\begin{equation*}
\Pr(\hat\theta_2 - z s_1
  \le\theta\le \hat\theta_2 + z s_1)
>
\Pr(\hat\theta_B - z s_1
  \le\theta\le \hat\theta_B + z s_1)
.
\end{equation*}
\end{lemma}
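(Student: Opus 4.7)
The approach is to recognize that the coverage probability of $\hat\theta\pm z s_1$ when $\hat\theta\sim\NormDist(\theta+b,s_2^2)$ is, by the same calculation as in \cref{eqn:CI2-CP-b2},
\begin{equation*}
g(b) \equiv \Phi\biggl(\frac{zs_1-b}{s_2}\biggr) - \Phi\biggl(\frac{-zs_1-b}{s_2}\biggr),
\end{equation*}
and the lemma reduces to showing that $g$ is strictly decreasing in $\abs{b}$. The symmetry $g(-b)=g(b)$ is immediate from $\Phi(x)=1-\Phi(-x)$ (as derived for $\CP(-t,z)$ in the excerpt), so it suffices to prove strict monotonicity for $b\ge 0$.

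Next, I would differentiate to get
\begin{equation*}
g'(b) = \frac{1}{s_2}\biggl[\phi\biggl(\frac{-zs_1-b}{s_2}\biggr) - \phi\biggl(\frac{zs_1-b}{s_2}\biggr)\biggr].
\end{equation*}
For $b>0$, we have $(zs_1+b)^2 - (zs_1-b)^2 = 4zs_1 b > 0$, so $\abs{-zs_1-b} > \abs{zs_1-b}$. Because $\phi$ is strictly decreasing in the magnitude of its argument, the bracketed term is strictly negative, so $g'(b)<0$ for all $b>0$. Combined with $g(-b)=g(b)$, this gives strict monotonicity of $g$ in $\abs{b}$.

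Finally, applying this to $\abs{b_2}<\abs{b_B}$ yields $g(b_2)>g(b_B)$, which is precisely the stated inequality (since $\hat\theta_2$ and $\hat\theta_B$ have the same variance $s_2^2$). There is no real obstacle here; the only thing to take care of is ensuring strict inequality, which follows because $\phi$ is strictly (not just weakly) decreasing in $\abs{\cdot}$ and $\abs{b_2}<\abs{b_B}$ is strict. An alternative high-level phrasing — which one might mention in passing — is that this is the standard ``Anderson's lemma'' fact for a symmetric unimodal density: the probability that a $\NormDist(\mu,s_2^2)$ variable lies in the symmetric interval $[-zs_1,zs_1]$ is maximized at $\mu=0$ and strictly decreases as $\abs{\mu}$ grows.
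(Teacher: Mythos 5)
Your proof is correct, and it takes a somewhat different route from the paper's. The paper factors the key step out into a separate appendix lemma (\cref{res:shift-in}): after writing both coverage probabilities as $\Phi\bigl((zs_1-b)/s_2\bigr)-\Phi\bigl((-zs_1-b)/s_2\bigr)$, it reduces to showing $\Phi(a+d)-\Phi(a-d)>\Phi(b+d)-\Phi(b-d)$ for $\abs{a}<\abs{b}$, which it proves by rewriting the difference as $\int_0^{b-a}[\phi(x+a-d)-\phi(x+a+d)]\diff{x}$ and noting the integrand is positive. You instead prove the same monotonicity-in-$\abs{b}$ fact by differentiating $g(b)$ directly and checking $g'(b)<0$ for $b>0$ together with the symmetry $g(-b)=g(b)$; both arguments ultimately rest on the single fact that $\phi$ is strictly decreasing in the magnitude of its argument, and your identity $(zs_1+b)^2-(zs_1-b)^2=4zs_1b>0$ plays exactly the role of the paper's inequality $\abs{x+a-d}<\abs{x+a+d}$. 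Your derivative computation is right, and the strictness is handled correctly (even though $g'(0)=0$, integrating $g'<0$ over $(\,\abs{b_2},\abs{b_B}\,)$ still gives a strict gap). The derivative route is marginally more compact; the paper's integral comparison is a self-contained lemma it can cite again elsewhere (it is reused in \cref{res:CI5-CP,res:CI6}), which is presumably why it was packaged that way. Your closing remark that this is the one-dimensional Anderson's-lemma fact for a symmetric unimodal density is an accurate higher-level summary of what both proofs establish.
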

\begin{proof}
The CP for each estimator can be written in terms of the standard normal CDF $\Phi(\cdot)$, and then \cref{res:shift-in} can be applied.
First,
\begin{align*}
\Pr(\hat\theta_2 - z s_1
    \le \theta \le
    \hat\theta_2 + z s_1)
  &=
\Pr(\theta - z s_1 \le \hat\theta_2 \le \theta + z s_1)
\\&=
\Pr\biggl(\frac{-z s_1 - b_2}{s_2} \le \overbrace{\frac{\hat\theta_2-\theta-b_2}{s_2}}^{\equiv Z_1\sim\NormDist(0,1)}
   \le \frac{z s_1 - b_2}{s_2} \biggr)
\\&=
 \Phi\biggl(\frac{ z s_1 - b_2}{s_2}\biggr)
-\Phi\biggl(\frac{-z s_1 - b_2}{s_2}\biggr)
,\\
\Pr(\hat\theta_B - z s_1
    \le \theta \le
    \hat\theta_B + z s_1)
  &=
\Pr(\theta - z s_1 \le \hat\theta_B \le \theta + z s_1)
\\&=
\Pr\biggl(\frac{-z s_1 - b_B}{s_2} \le \overbrace{\frac{\hat\theta_B-\theta-b_B}{s_2}}^{\equiv Z_2\sim\NormDist(0,1)}
   \le \frac{z s_1 - b_B}{s_2} \biggr) 
\\&=
 \Phi\biggl(\frac{ z s_1 - b_B}{s_2}\biggr)
-\Phi\biggl(\frac{-z s_1 - b_B}{s_2}\biggr)
.
\end{align*}
The result follows by applying \cref{res:shift-in} with $d\equiv z s_1 / s_2$, $a\equiv -b_2/s_2$, and $b\equiv -b_B/s_2$, noting that the \cref{res:shift-in} condition $\abs{a}<\abs{b}$ is satisfied because $\abs{b_2}<\abs{b_B}$.
\end{proof}

\section{Shorter CI based on bias bound}
\label{sec:bias-bound}

Given the estimator sampling distributions and the MSE bound in \cref{eqn:theta-hat-dists,eqn:lower-MSE}, we can bound the magnitude of the bias.
Then, similar in spirit to the strategy of \citet{ArmstrongKolesar2021}, we construct a symmetric two-sided CI to have exact coverage when the bias attains the upper bound.
By \cref{res:lower-bias-higher-CP}, the coverage is even better with smaller magnitude bias, so this CI has at least $1-\alpha$ coverage probability for all levels of bias satisfying the MSE bound.
Note that this relationship between CP and bias is opposite that of $\CI_2$: here, CP is highest when $b_2=0$, decreasing to $1-\alpha$ as $\abs{b_2}$ increases to the bound, whereas for $\CI_2$ the CP is $1-\alpha$ at $b_2=0$ and (for conventional confidence levels) increases toward $100\%$ as $\abs{b_2}$ increases.

This CI trades some of the higher CP of $\CI_2$ for shorter length.
For example, at confidence level $95\%$ when $\CI_2$ has CP strictly above $95\%$ for any non-zero bias $b_2\ne0$, this alternative CI is strictly shorter than $\CI_2$ while maintaining at least $95\%$ CP.
This is generally preferable to $\CI_2$.
The only additional requirement is the reliance on $s_2$, the biased estimator's standard error.
Usually a consistent estimator is available, corresponding to known $s_2$ in our setting, but if not then $\CI_2$ can still be used.

\subsection{CI definition and properties}

Formally, the new CI is defined as follows.
To facilitate comparison, it is defined the same as $\CI_2$ but replacing $z_{1-\alpha/2}$ with $\tilde{z}_{1-\alpha/2}$, defined implicitly using the $\CP(t,z)$ function from \cref{eqn:CI2-CP-t}:
\begin{equation}
\label{eqn:CI5}
\CI_5 \equiv \hat\theta_2 \pm \tilde{z}_{1-\alpha/2}s_1
,\quad
1-\alpha = \CP\bigl(\cos^{-1}(s_2/s_1), \tilde{z}_{1-\alpha/2}\bigr) .
\end{equation}
This $\tilde{z}_{1-\alpha/2}$ sets CP to $1-\alpha$ when $\abs{b_2}$ satisfies $b_2^2+s_2^2=s_1^2$ (equal MSE), as detailed in the proof of \cref{res:CI5-CP}.

\begin{theorem}
\label{res:CI5-CP}
Given the sampling distributions in \cref{eqn:theta-hat-dists} with known $s_1$ and $s_2$ but unknown $b_2$, for $\CI_5$ defined in \cref{eqn:CI5}, coverage probability is at least $1-\alpha$ for any $b_2$ satisfying the MSE inequality in \cref{eqn:lower-MSE}:
\[ \inf_{-\sqrt{s_1^2-s_2^2} \le b_2 \le \sqrt{s_1^2-s_2^2}} \Pr(\theta\in\CI_5) = 1-\alpha , \]
with the minimum $1-\alpha$ attained when $b_2=\pm\sqrt{s_1^2-s_2^2}$.
\end{theorem}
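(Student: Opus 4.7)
My plan is to exploit two facts already established in the paper: the closed-form coverage probability derived in (\ref{eqn:CI2-CP-t}), and the bias-monotonicity result of \cref{res:lower-bias-higher-CP}. The definition of $\tilde{z}_{1-\alpha/2}$ in (\ref{eqn:CI5}) calibrates the critical value so that coverage equals exactly $1-\alpha$ at the equal-MSE boundary; everything else is squeezed above this by the lemma.

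First I would observe that because $\CI_5$ has the same functional form as $\CI_2$ but with half-length $\tilde{z}_{1-\alpha/2}\,s_1$ rather than $z_{1-\alpha/2}\,s_1$, the derivation in (\ref{eqn:CI2-CP-b2}) applies verbatim to give
\[
\Pr(\theta\in\CI_5)
= \Phi\!\left(\frac{\tilde{z}_{1-\alpha/2}s_1-b_2}{s_2}\right)
-\Phi\!\left(\frac{-\tilde{z}_{1-\alpha/2}s_1-b_2}{s_2}\right).
\]
Next I would handle the boundary case $|b_2|=\sqrt{s_1^2-s_2^2}$. Setting $t^{*}\equiv\cos^{-1}(s_2/s_1)\in[0,\pi/2]$ gives $s_1\cos(t^{*})=s_2$ and $s_1\sin(t^{*})=\sqrt{s_1^2-s_2^2}$, so the reparameterization (\ref{eqn:sin-cos}) with $t=t^{*}$ (or $t=-t^{*}$, exploiting the evenness of $\CP(\cdot,z)$ proved just after (\ref{eqn:CI2-CP-t})) matches $b_2=\pm\sqrt{s_1^2-s_2^2}$. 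Substituting into the display above and applying the algebraic simplification that produced (\ref{eqn:CI2-CP-t}) yields $\Pr(\theta\in\CI_5)=\CP(t^{*},\tilde{z}_{1-\alpha/2})$, which equals $1-\alpha$ by the defining equation of $\tilde{z}_{1-\alpha/2}$ in (\ref{eqn:CI5}). This establishes that the infimum is at most $1-\alpha$ and is attained at the MSE boundary.

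For the interior case $|b_2|<\sqrt{s_1^2-s_2^2}$ I would invoke \cref{res:lower-bias-higher-CP}. Define the hypothetical $\hat\theta_B\sim\NormDist(\theta+b_B,s_2^2)$ with $b_B\equiv\sgn(b_2)\sqrt{s_1^2-s_2^2}$ (choosing $b_B=\sqrt{s_1^2-s_2^2}$ if $b_2=0$), so that $|b_B|>|b_2|$. Applying the lemma with critical value $z=\tilde{z}_{1-\alpha/2}$ gives $\Pr(\theta\in\CI_5)>\Pr(\hat\theta_B-\tilde{z}_{1-\alpha/2}s_1\le\theta\le\hat\theta_B+\tilde{z}_{1-\alpha/2}s_1)=1-\alpha$, where the final equality follows from the boundary computation of the previous paragraph applied to $\hat\theta_B$. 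Combining the two cases shows the infimum equals $1-\alpha$ and is attained exactly at $b_2=\pm\sqrt{s_1^2-s_2^2}$.

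I do not anticipate a serious obstacle here, since the result is essentially a repackaging of \cref{res:lower-bias-higher-CP} combined with the implicit definition of $\tilde{z}_{1-\alpha/2}$. The one point requiring mild care is verifying that the parameterization $t^{*}=\cos^{-1}(s_2/s_1)$ genuinely lies in the admissible range $[0,\pi/2]$ used in (\ref{eqn:sin-cos}); this holds because the MSE inequality (\ref{eqn:lower-MSE}) forces $s_2\le s_1$, so $s_2/s_1\in[0,1]$ and the inverse cosine is well defined. A secondary bookkeeping point is that $\tilde{z}_{1-\alpha/2}$ is itself uniquely determined: $\CP(t^{*},z)$ is continuous and strictly increasing in $z$ with limits $0$ and $1$, so for any $1-\alpha\in(0,1)$ there is a unique solution.
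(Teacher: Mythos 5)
Your proposal is correct and follows essentially the same route as the paper's own proof: derive the coverage probability of $\CI_5$ from the $\CI_2$ formula in \cref{eqn:CI2-CP-b2}, use evenness in $b_2$ and the parameterization $t^*=\cos^{-1}(s_2/s_1)$ to show the calibration in \cref{eqn:CI5} forces exactly $1-\alpha$ coverage at the equal-MSE boundary, and then invoke \cref{res:lower-bias-higher-CP} to show coverage is strictly higher for smaller $\abs{b_2}$. Your added remarks on the admissibility of $t^*$ and the uniqueness of $\tilde{z}_{1-\alpha/2}$ are correct details the paper leaves implicit.
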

\begin{proof}
Consider $\CI_5$ with a generic critical value $\tilde{z}$, $\hat\theta_2\pm\tilde{z}s_1$.
Because this has the same structure as $\CI_2$, its CP is the same as derived in \cref{eqn:CI2-CP-b2}, only replacing $z$ with $\tilde{z}$:
\begin{equation}
\label{eqn:CI5-CP}
\Pr(\hat\theta_2-\tilde{z}s_1
    \le \theta
    \le \hat\theta_2+\tilde{z}s_1)
= \Phi(( \tilde{z}s_1-b_2)/s_2)
 -\Phi((-\tilde{z}s_1-b_2)/s_2) .
\end{equation}
This is an even function of $b_2$ because the value does not change when replacing $b_2$ with $-b_2$: using $\Phi(x)=1-\Phi(-x)$,
\begin{align}\notag
&\Phi(( \tilde{z}s_1-(-b_2))/s_2)
-\Phi((-\tilde{z}s_1-(-b_2))/s_2)
\\&= \notag
  1-\Phi((-\tilde{z}s_1-b_2)/s_2)
-[1-\Phi(( \tilde{z}s_1-b_2)/s_2)]
\\&= \label{eqn:CI5-CP-even}
 \Phi(( \tilde{z}s_1-b_2)/s_2)
-\Phi((-\tilde{z}s_1-b_2)/s_2) .
\end{align}
Below, we show that the choice $\tilde{z}=\tilde{z}_{1-\alpha/2}$ defined in \cref{eqn:CI5} sets this CP equal to $1-\alpha$ in the bounding case of equal MSE.
Equal MSE means $b_2^2+s_2^2=s_1^2$, which is the same as using $t=\cos^{-1}(s_2/s_1)$ with $s_2=s_1\cos(t)$ and $b_2=s_1\sin(t)$ as in \cref{eqn:sin-cos}, where using $b_2>0$ is without loss of generality because CP is an even function of $b_2$ as shown in \cref{eqn:CI5-CP-even}.
That is, given $b_2^2+s_2^2=s_1^2$ and $\tilde{z}=\tilde{z}_{1-\alpha/2}$ from \cref{eqn:CI5}, \cref{eqn:CI5-CP} implies
\begin{align*}
\Pr(\theta\in\CI_5)
= \Phi(( \tilde{z}_{1-\alpha/2}s_1-b_2)/s_2)
 -\Phi((-\tilde{z}_{1-\alpha/2}s_1-b_2)/s_2)
= \CP\bigl( \cos^{-1}(s_2/s_1), \tilde{z}_{1-\alpha/2} \bigr) ,
\end{align*}
which by \cref{eqn:CI5} equals $1-\alpha$.
Applying \cref{res:lower-bias-higher-CP} with $b_B=\pm\sqrt{s_1^2-s_2^2}$ and $z=\tilde{z}_{1-\alpha/2}$, this $1-\alpha$ CP is the lower bound when $\abs{b_2}<\abs{b_B}$ given the same $s_1$ and $s_2$.
\end{proof}

In terms of length, $\CI_5$ is shorter than $\CI_2$ essentially whenever $\CI_2$'s CP is above $1-\alpha$, which at the $95\%$ confidence level is always.
That is, compared to $\CI_2$, $\CI_5$ trades some coverage probability for shorter length, which is usually desired as long as CP remains at least $1-\alpha$, which \cref{res:CI5-CP} says it does.
The cases where $\CI_5$ is longer are those in which $\CI_2$ may undercover, which are lower than the conventionally used confidence levels anyway.

\begin{theorem}
\label{res:CI5-length}
Given the sampling distributions in \cref{eqn:theta-hat-dists} with known $s_1$ and $s_2$ (with $s_2<s_1$) but unknown $b_2$ satisfying the MSE inequality in \cref{eqn:lower-MSE}, $\CI_5$ defined in \cref{eqn:CI5} is strictly shorter than $\CI_2$ defined in \cref{eqn:CI} for all confidence levels above $91.7\%$.
For confidence level $90\%$, $\CI_5$ is no more than $0.014\%$ longer than $\CI_2$, and can be significantly shorter.
\end{theorem}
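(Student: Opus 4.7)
The plan is to reduce the length comparison to a critical-value comparison. Both CIs have the form $\hat\theta_2 \pm (\text{crit})\, s_1$, so the length ratio is simply $\tilde{z}_{1-\alpha/2}/z_{1-\alpha/2}$. The key tool is strict monotonicity of $\CP(t,z)$ in $z$ for each fixed $t \in [0, \pi/2)$: differentiating the expression in \cref{eqn:CI2-CP-t} yields
\begin{equation*}
\pD{\CP(t,z)}{z} = \sec(t)\bigl[\phi(z\sec(t) - \tan(t)) + \phi(-z\sec(t) - \tan(t))\bigr] > 0 ,
\end{equation*}
since $\sec(t) > 0$ on $[0, \pi/2)$ and $\phi(\cdot) > 0$.

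For confidence levels $1-\alpha \ge 0.917$, let $t^* \equiv \cos^{-1}(s_2/s_1)$; this lies in $(0, \pi/2)$ because $0 < s_2 < s_1$. By \cref{res:equal-MSE}, applied to the equal-MSE case with $b_2 = \sqrt{s_1^2 - s_2^2} \ne 0$, we have $\CP(t^*, z_{1-\alpha/2}) > 1-\alpha$, whereas the defining equation for $\tilde{z}_{1-\alpha/2}$ in \cref{eqn:CI5} gives $\CP(t^*, \tilde{z}_{1-\alpha/2}) = 1-\alpha$. Strict monotonicity in $z$ then forces $\tilde{z}_{1-\alpha/2} < z_{1-\alpha/2}$, so $\CI_5$ is strictly shorter than $\CI_2$.

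At confidence level $90\%$, the same structural argument reverses slightly: \cref{res:equal-MSE} shows that $\CP(t, z_{0.95}) \ge 0.899953$ for all $t$, falling only $4.7 \times 10^{-5}$ below $0.9$ at the worst case $t$. So $\tilde{z}_{0.95}$ exceeds $z_{0.95}$ at the worst $t^*$, but by only a small margin; a first-order expansion of $\CP(t^*, \cdot)$ about $z_{0.95}$ together with direct numerical evaluation shows the worst-case ratio $\tilde{z}_{0.95}/z_{0.95}$ does not exceed $1.00014$, translating to at most a $0.014\%$ length increase. For any $t$ with $\CP(t, z_{0.95}) > 0.9$, the same monotonicity argument as before yields $\tilde{z}_{0.95} < z_{0.95}$ and strictly shorter $\CI_5$.

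The main obstacle is the tight $0.014\%$ figure at confidence level $90\%$: locating the $t$ that minimizes $\CP(t, z_{0.95})$ and solving $\CP(t, \tilde{z}) = 0.9$ do not admit closed forms and must be handled by numerical study of the $\CP$ function, parallel to the numerical portion of \cref{res:equal-MSE}. The structural part (strictly shorter $\CI_5$ above $91.7\%$) follows immediately from monotonicity in $z$ combined with \cref{res:equal-MSE}.
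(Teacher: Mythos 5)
Your proposal is correct and follows essentially the same route as the paper: reduce the length comparison to $\tilde{z}_{1-\alpha/2}$ versus $z_{1-\alpha/2}$, use monotonicity of $\CP(t,\cdot)$ in the critical value together with \cref{res:equal-MSE} to conclude strict shortening for $1-\alpha\ge0.917$, and handle the $90\%$ case numerically. The only difference is cosmetic — you verify the monotonicity by explicitly differentiating $\CP(t,z)$ in $z$, whereas the paper simply asserts it (``wider CI has higher CP'') — which is a small but welcome tightening.
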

\begin{proof}
Given their similar structures, $\CI_5$ is shorter than $\CI_2$ if and only if $\tilde{z}_{1-\alpha/2}<z_{1-\alpha/2}$.
Consider the implicit definition of $\tilde{z}_{1-\alpha/2}$ in \cref{eqn:CI5}.
Given that the $\CP(\cdot,\cdot)$ function is increasing in its second argument (wider CI has higher CP), if we plug in $z_{1-\alpha/2}$ and get a CP value above $1-\alpha$, then we know $\tilde{z}_{1-\alpha/2}<z_{1-\alpha/2}$.
According to \cref{res:equal-MSE,res:lower-MSE}, for $1-\alpha\ge0.917$, $\CP(t,z_{1-\alpha/2})>1-\alpha$ for all $0<t\le\pi/2$.
Thus, for $1-\alpha\ge0.917$, $\CP\bigl(\cos^{-1}(s_2/s_1),z_{1-\alpha/2}\bigr)>1-\alpha$ for all $s_2<s_1$, implying $\tilde{z}_{1-\alpha/2}<z_{1-\alpha/2}$ and thus $\CI_5$ being shorter than $\CI_2$.

For $1-\alpha=0.9$, numerical analysis find the function $\CP(t,z_{1-\alpha/2})$ is minimized at $t=0.359$ with value $0.899953$; given that worst-case $t$, $0.9=\CP(0.359,1.6451)$, so the solution is $\tilde{z}_{1-\alpha/2}=1.6451$ rather than $z_{1-\alpha/2}=1.6449$, and (without rounding first) $\tilde{z}_{1-\alpha/2}/z_{1-\alpha/2}=1.0001369$, or $0.014\%$ larger.
\end{proof}

\begin{figure}[htbp]
\centering
\includegraphics%
{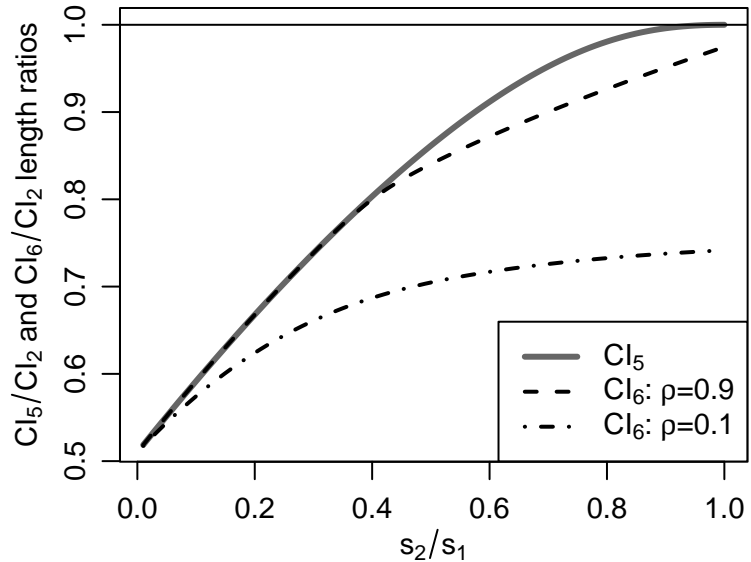}
\caption{\label{fig:length}%
$\CI_5$ (or $\CI_6$) length divided by $\CI_2$ length as function of $s_2/s_1$, $95\%$ confidence level.}
\end{figure}

\Cref{fig:length} shows the length advantage of $\CI_5$ over $\CI_2$ (which has the same length as $\CI_1$) as a function of $s_2/s_1$ at confidence level $1-\alpha=0.95$.
If $s_2$ is near $s_1$, then the lengths are very similar.
If $s_2$ is a small fraction of $s_1$, then $\CI_5$ can be below $60\%$ the length of $\CI_2$.
Even for intermediate values of $s_2/s_1$, $\CI_5$ offers a meaningful advantage in length.
The yet shorter $\CI_6$ is derived in \cref{sec:lincom}.

\subsection{Example}

Consider the properties of $\CI_5$ and $\CI_2$ in the following example with confidence level $95\%$.
Let $s_1=1$ and $s_2=s_1/2$.
From \cref{eqn:CI}, $\CI_2 \equiv \hat\theta_2\pm z_{1-\alpha/2}s_1$, using only $s_1$ and not $s_2$.
From \cref{eqn:CI5}, using both $s_1$ and $s_2$, $\CI_5 \equiv \hat\theta_2\pm \tilde{z}_{1-\alpha/2}s_1$ with $\tilde{z}_{0.975}$ solving $0.95=\CP(\cos^{-1}(s_2/s_1),\tilde{z}_{0.975})$, yielding $\tilde{z}_{0.975}=1.69$.

Regardless of the bias $b_2$, $\CI_5$ is around $14\%$ shorter than $\CI_2$ (which is the same length as $\CI_1$) because $\tilde{z}_{0.975}/z_{0.975}\approx0.86$.

The CP of each CI further depends on the bias $b_2$.
For example, consider $b_2=s_1/2$, so the biased estimator has strictly lower MSE than the unbiased estimator:
\[ \MSE(\hat\theta_2) = b_2^2+s_2^2 = s_1^2/2<s_1^2 = \MSE(\hat\theta_1) . \]
Using \cref{eqn:CI2-CP-b2}, the CP of $\CI_2$ is
\begin{align*}
 \Phi\biggl(\frac{ zs_1-b_2}{s_2}\biggr)
-\Phi\biggl(\frac{-zs_1-b_2}{s_2}\biggr)
  &=
 \Phi\biggl(\frac{ z_{0.975}s_1-s_1/2}{s_1/2}\biggr)
-\Phi\biggl(\frac{-z_{0.975}s_1-s_1/2}{s_1/2}\biggr)
\\&=
 \Phi(2z_{0.975}-1)
-\Phi(-2z_{0.975}-1)
= 0.998 .
\end{align*}
This is better than $\CI_1$ because $\CI_2$ is the same length and has only $0.2\%$ coverage error rate, instead of $5\%$ error rate.
The CP of $\CI_5$ uses the same formula as for $\CI_2$ above but with the lower $\tilde{z}_{0.975}$:
\begin{align*}
 \Phi(2\tilde{z}_{0.975}-1)
-\Phi(-2\tilde{z}_{0.975}-1)
= 0.991 .
\end{align*}
Despite being $14\%$ shorter than $\CI_2$, $\CI_5$ still has CP well above $1-\alpha=0.95$ because $\MSE(\hat\theta_2)$ is well below $\MSE(\hat\theta_1)$.
In this case, it seems well worth trading the $99.8\%$ CP of $\CI_2$ for the slightly lower $99.1\%$ CP of $\CI_5$ in return for a $14\%$ shorter interval.

If instead there is the largest possible bias of $b_2=\sqrt{3}s_1/2$, which implies $\MSE(\hat\theta_2)=\MSE(\hat\theta_1)$, then the CP of both intervals is smaller.
The CP of $\CI_2$ is
\[ \Phi( 2z_{0.975}-\sqrt{3})
  -\Phi(-2z_{0.975}-\sqrt{3})
  = 0.986 ,
 \]
and the CP of $\CI_5$ drops to the nominal $95\%$ level:
\begin{equation*}
 \Phi(2\tilde{z}_{0.975}-\sqrt{3})
-\Phi(-2\tilde{z}_{0.975}-\sqrt{3})
= 0.95 .
\end{equation*}
In that case, $\CI_5$ trades the entire ``excess'' CP of $\CI_2$ in return for the $14\%$ shorter length.

\section{Shortest CI based on convex combination estimator}
\label{sec:lincom}

If we strengthen \cref{eqn:theta-hat-dists} to joint normality with known correlation
\begin{equation}
\label{eqn:joint}
\Biggl(\,
\begin{matrix}
\hat\theta_1\\[-6pt] \hat\theta_2
\end{matrix}
\,\Biggr)
\sim
\NormDist\left(
\Biggl(\,
\begin{matrix}
\theta \\[-6pt] \theta+b_2
\end{matrix}
\,\Biggr)
,\;
\Biggl(\,
\begin{matrix}
s_1^2 & \rho s_1 s_2 \\[-6pt]
\rho s_1 s_2 & s_2^2
\end{matrix}
\,\Biggr)
\right)
,
\end{equation}
then further improvement may be possible by taking a convex combination of $\hat\theta_1$ and $\hat\theta_2$,
\begin{equation}
\label{eqn:theta3}
\hat\theta_{3w} \equiv (1-w)\hat\theta_1 + w\hat\theta_2
,\quad
0\le w\le1 .
\end{equation}
Specifically, our $\CI_6$ defined in \cref{eqn:CI6} is centered at this $\hat\theta_{3w}$, with the known value $w$ chosen by the user to minimize the corresponding CI's length subject to the desired worst-case coverage probability.
As seen below, this optimal $w$ depends on the standard errors $s_1$ and $s_2$ (and $\rho$).
We do not allow more general linear combinations with $w>1$ or $w<0$ because then the MSE of the linear combination may be higher than that of $\hat\theta_1$; similarly, for general linear combination weights that do not sum to $1$, the MSE of the linear combination may be (arbitrarily) higher than that of $\hat\theta_1$ without additional assumptions of bounds on $\theta$.

This setting is similar to Section 4.1 of \citet{ArmstrongKolesar2021}.%
\footnote{Thanks to an anonymous reviewer for making this connection and suggesting we consider convex combinations.}
Our \cref{eqn:joint,eqn:theta3} are similar to their (13) and corresponding linear estimator, except our bound on bias $b_2$ comes from the assumed MSE improvement of $\hat\theta_2$, and our convex combination weight is not required to satisfy any other condition like their (5).

Practically, the strengthening to joint normality is rarely restrictive, but the additional reliance on the correlation $\rho$ may incur extra computation time and estimation error.

The strategy here is to show that the convex combination $\hat\theta_{3w}$ is normally distributed with known variance, so we can apply the same strategy as for $\CI_5$.
One difference is that we know more about the maximum bias, because we know $b_2^2\le s_1^2-s_2^2$ and that the bias of $\hat\theta_{3w}$ is $wb_2$ for known $w$.
This provides a tighter bias bound than simply using $(wb_2)^2\le s_1^2 - s_{3w}^2$ from the MSE inequality.

The new CI is derived as follows.
As formally stated and proved in \cref{res:CP-w-rho}, the CI $\hat\theta_{3w}\pm s_1z$ with general critical value $z$ has coverage probability
\begin{equation}
\label{eqn:CP-w}
\begin{split}
\CP_{w,\rho}(t, z)
&\equiv
 \Phi\biggl(\frac{ z-w\sin(t)}{\sqrt{(1-w)^2+w^2[\cos(t)]^2+2\rho w(1-w)\cos(t)}}\biggr)
\\&\quad
-\Phi\biggl(\frac{-z-w\sin(t)}{\sqrt{(1-w)^2+w^2[\cos(t)]^2+2\rho w(1-w)\cos(t)}}\biggr) ,
\end{split}
\end{equation}
using the parameterization $b_2=s_1\sin(t)$ and $s_2=s_1\cos(t)$ from \cref{eqn:sin-cos}.
Similar to the definition of $\tilde{z}_{1-\alpha/2}$ in \cref{eqn:CI5}, define $\tilde{z}_{w,1-\alpha/2}$ to solve
\begin{equation}
\label{eqn:tilde-z-w}
1-\alpha = \CP_{w,\rho}\bigl( \cos^{-1}(s_2/s_1) , \tilde{z}_{w,1-\alpha/2} \bigr) .
\end{equation}
Define
\begin{equation}
\label{eqn:CI6}
\CI_6 \equiv \hat\theta_{3w^*} \pm s_1 \tilde{z}_{w^*,1-\alpha/2}
,\quad
w^* \equiv \argmin_{0\le w\le1} \tilde{z}_{w,1-\alpha/2} .
\end{equation}
That is, $\tilde{z}_{w,1-\alpha/2}$ is calibrated to achieve the desired coverage probability for any $w$, and $w^*$ is then chosen to minimize the CI's length.

\begin{theorem}
\label{res:CI6}
Given \cref{eqn:lower-MSE,eqn:joint} and the definitions in \cref{eqn:theta3,eqn:CI6}, the coverage probability of $\CI_6$ is at least $1-\alpha$, and the length of $\CI_6$ is less than or equal to the length of $\CI_5$.
\end{theorem}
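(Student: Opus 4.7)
The plan is to break the theorem into its two claims and address each by reducing to results already in hand.

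For the coverage guarantee, I would first compute the sampling distribution of $\hat\theta_{3w}$ under \cref{eqn:joint}: a linear combination of jointly normal variables is normal, so $\hat\theta_{3w}\sim\NormDist(\theta+wb_2, s_{3w}^2)$ with $s_{3w}^2\equiv(1-w)^2 s_1^2+w^2 s_2^2+2\rho w(1-w)s_1 s_2$. Crucially $s_{3w}^2$ depends on $s_1,s_2,\rho,w$ but not on $b_2$. The CP of the generic interval $\hat\theta_{3w}\pm s_1 z$ is then obtained in the standard way:
\begin{equation*}
\Pr(\theta\in\hat\theta_{3w}\pm s_1 z)
= \Phi\!\left(\frac{s_1 z-wb_2}{s_{3w}}\right)
-\Phi\!\left(\frac{-s_1 z-wb_2}{s_{3w}}\right),
\end{equation*}
which under the parameterization $b_2=s_1\sin(t)$, $s_2=s_1\cos(t)$ yields $\CP_{w,\rho}(t,z)$ from \cref{eqn:CP-w} (this is \cref{res:CP-w-rho}). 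By construction \cref{eqn:tilde-z-w} calibrates $\tilde{z}_{w,1-\alpha/2}$ so that this CP equals exactly $1-\alpha$ at $t=\cos^{-1}(s_2/s_1)$, i.e.\ at the extreme admissible bias $|b_2|=\sqrt{s_1^2-s_2^2}$ from \cref{eqn:lower-MSE}.

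To conclude coverage at least $1-\alpha$ over all admissible $b_2$, I would reapply \cref{res:lower-bias-higher-CP} with the roles reassigned: treat $\hat\theta_{3w}$ as the biased normal estimator with bias $wb_2$ and (known, $b_2$-independent) variance $s_{3w}^2$, and treat the half-width $s_1\tilde{z}_{w,1-\alpha/2}$ as playing the role of $zs_1$ in the lemma. Because $w\geq0$, reducing $|b_2|$ below $\sqrt{s_1^2-s_2^2}$ reduces $|wb_2|$, and the lemma gives strictly larger CP (with the trivial case $w=0$ handled separately since $\hat\theta_{30}=\hat\theta_1$, whose CI covers at exactly $1-\alpha$). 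Thus the calibration boundary minimizes CP over admissible biases, yielding $\Pr(\theta\in\CI_6)\geq1-\alpha$.

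For the length claim, the key observation is that $\CI_5$ is the special case $w=1$ of the $\CI_6$ family. Substituting $w=1$ into \cref{eqn:CP-w} collapses the denominator to $\cos(t)$ and recovers exactly $\CP(t,z)$ from \cref{eqn:CI2-CP-t}, so the calibration equation \cref{eqn:tilde-z-w} at $w=1$ reduces to the defining equation of $\tilde{z}_{1-\alpha/2}$ in \cref{eqn:CI5}, giving $\tilde{z}_{1,1-\alpha/2}=\tilde{z}_{1-\alpha/2}$. Because $w^*$ minimizes $\tilde{z}_{w,1-\alpha/2}$ over $[0,1]$, we obtain $\tilde{z}_{w^*,1-\alpha/2}\leq\tilde{z}_{1,1-\alpha/2}=\tilde{z}_{1-\alpha/2}$, and multiplying by $2s_1$ compares CI lengths.

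The main obstacle I anticipate is the careful reindexing needed to reuse \cref{res:lower-bias-higher-CP} for the combination estimator rather than $\hat\theta_2$; the lemma's proof is structure-agnostic, relying only on normality and an even-function/shift-in argument, so the transfer should be routine provided one emphasizes that $s_{3w}^2$ genuinely does not vary with $b_2$, so that different biases are compared under matched variances.
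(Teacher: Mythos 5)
Your proposal is correct and follows essentially the same route as the paper's proof: both derive the normal distribution of $\hat\theta_{3w}$ (the paper delegates this to \cref{res:CP-w-rho}), note that \cref{eqn:tilde-z-w} calibrates exact $1-\alpha$ coverage at the equal-MSE boundary, invoke \cref{res:lower-bias-higher-CP} with $s_2$ replaced by $s_{3w}$ and $b_2$ by $wb_2$ to handle smaller biases, and obtain the length claim from $\CI_5$ being the $w=1$ member of the family minimized over by $w^*$. Your explicit handling of the degenerate $w=0$ case (where the lemma's strict-inequality hypothesis fails) is a small point the paper leaves implicit, but otherwise the arguments coincide.
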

\begin{proof}
By construction, $\tilde{z}_{w,1-\alpha/2}$ in \cref{eqn:tilde-z-w} sets the coverage probability equal to $1-\alpha$ exactly in the equal-MSE case.
By \cref{res:lower-bias-higher-CP} (replacing $s_2$ with $s_{3w}$, and $b_2$ with $wb_2$, and $b_B=\pm w\sqrt{s_1^2-s_2^2}$), if $\MSE(\hat\theta_2)<\MSE(\hat\theta_1)$ strictly, then the CP is even higher.
The previous $\CI_5$ is the special case with $w=1$; given that $w^*$ minimizes length over $0\le w\le1$, $\CI_6$ is shorter than $\CI_5$ when $w^*<1$ and has equal length when $w^*=1$.
\end{proof}

\Cref{fig:length} (from earlier) shows how the benefit of $\CI_6$ depends on the correlation $\rho=\Corr(\hat\theta_1,\hat\theta_2)$ and the ratio $s_2/s_1$.
If $\rho$ is near $1$, then $\hat\theta_1$ contains little information not already contained in $\hat\theta_2$, so taking a convex combination cannot improve much over simply using $\hat\theta_2$.
If $s_2/s_1$ is small, then $\CI_5$ already has much shorter length than $\CI_2$, so the CI-optimal convex combination will put most or all of the weight on $\hat\theta_2$ anyway; that is $w^*\approx1$, so $\CI_6\approx\CI_5$.
Conversely, if $s_2/s_1$ is closer to one and the correlation $\rho$ is farther below one, then using $\hat\theta_{3w}$ can shorten the CI length considerably.
For example, with $\rho=0.1$ and $s_2/s_1=1$, the length of $\CI_6$ is only $74\%$ of the length of $\CI_5$.

\section{Simulation}
\label{sec:sim}

The following simulation illustrates the finite-sample properties of our proposed CIs in the context of smoothed quantile regression (SQR).
We use the R code for the smoothed instrumental variables quantile regression estimator of \citet{KaplanSun2017}, which inclues SQR as a special case (when the instrument vector equals the regressor vector), using their automatic plug-in version of the bandwidth formula in their Proposition 2 whose rate minimizes the asymptotic mean squared error (see also their Section 5 including Theorem 7).
Additional theoretical results for SQR are provided by \citet{FernandesEtAl2021}, including allowing a stochastic bandwidth sequence (Theorem 5(S)) and uniformity over ranges of the quantile index, and \citet{HeEtAl2023} further extend SQR to a growing number of regressors.
Code in R \citep{R.core} to replicate our simulation results is available on the first author's website.%
\footnote{\url{https://kaplandm.github.io/}}

Within each simulation replication, we do the following.
First, data are generated by
\begin{equation}
\label{eqn:DGP}
Y_i = \beta_0 + \theta X_i + U_i
,\quad
X_i\iid\NormDist(0,1)
,\quad
U_i\iid\NormDist(0,1)
,\quad X_i\independent U_i
,\quad i=1,\ldots,n ,
\end{equation}
with $\beta_0=1$ and $\theta=2$.
In this case, the slope of the conditional $\tau$-quantile function equals $\theta$ for all $0<\tau<1$.
Second, we use nonparametric pairs bootstrap to estimate the standard errors $s_1$ and $s_2$ from \cref{eqn:theta-hat-dists} and the correlation $\rho$ from \cref{eqn:joint}.
Within each bootstrap replication, we compute estimator $\hat\theta_1$ using \code{rq()} from the \code{quantreg} package \citep{R.quantreg} and compute estimator $\hat\theta_2$ using the code based on \citet{KaplanSun2017}; then, using all bootstrapped estimates, we compute the standard deviations and correlation.
Third, with $1-\alpha=0.95$, we compute $\CI_1$ and $\CI_2$ from \cref{eqn:CI}, $\CI_5$ from \cref{eqn:CI5}, and $\CI_6$ from \cref{eqn:CI6}.
As a crude alternative to help avoid undercoverage due to estimation error in $\rho$, we compute $\CI_6^s$ using value $(1+\rho)/2$.

\begin{table}[htbp]
\iftoggle{REV}{\bfseries}{}
\centering
\caption{\label{tab:sim}\iftoggle{REV}{\bfseries}{}Simulated coverage probability and median length, $1-\alpha=0.95$.}
\sisetup{round-precision=3, detect-weight, mode=text}
\begin{threeparttable}
\begin{tabular}[c]{S[table-format=4.0,round-precision=0] S[table-format=1.2,round-precision=2] 
S[table-format=1.3,round-precision=3]
S[table-format=1.3,round-precision=3]
S[table-format=1.3,round-precision=3]
S[table-format=1.3,round-precision=3]
S[table-format=1.3,round-precision=3]
c
S[table-format=1.3,round-precision=3]
S[table-format=1.3,round-precision=3]
S[table-format=1.3,round-precision=3]
S[table-format=1.3,round-precision=3]
S[table-format=1.3,round-precision=3] }
\toprule
 & & \multicolumn{5}{c}{Coverage probability (CP)}
 & & \multicolumn{5}{c}{Median length} \\
\cmidrule{3-7}
\cmidrule{9-13}
{$n$} & {$\tau$} &
{$\CI_1$} & {$\CI_2$} & {$\CI_5$} & {$\CI_6^s$} & {$\CI_6$} &&
{$\CI_1$} & {$\CI_2$} & {$\CI_5$} & {$\CI_6^s$} & {$\CI_6$} \\
\midrule
 100 & 0.10 & 0.9540 & 0.9780 & 0.9760 & 0.9660 & 0.9620 &&   0.7131 &   0.7131 &   0.7038 &   0.6830 &   0.6617 \\
 100 & 0.20 & 0.9720 & 0.9820 & 0.9760 & 0.9720 & 0.9660 &&   0.5861 &   0.5861 &   0.5682 &   0.5508 &   0.5310 \\
 100 & 0.30 & 0.9640 & 0.9800 & 0.9740 & 0.9700 & 0.9680 &&   0.5413 &   0.5413 &   0.5296 &   0.5147 &   0.5016 \\
 100 & 0.40 & 0.9460 & 0.9840 & 0.9740 & 0.9620 & 0.9500 &&   0.5156 &   0.5156 &   0.5040 &   0.4914 &   0.4750 \\
 100 & 0.50 & 0.9600 & 0.9900 & 0.9880 & 0.9720 & 0.9700 &&   0.5138 &   0.5138 &   0.4965 &   0.4829 &   0.4672 \\[2pt]
 200 & 0.10 & 0.9460 & 0.9700 & 0.9640 & 0.9540 & 0.9460 &&   0.5063 &   0.5063 &   0.4912 &   0.4757 &   0.4610 \\
 200 & 0.20 & 0.9460 & 0.9700 & 0.9640 & 0.9540 & 0.9500 &&   0.4043 &   0.4043 &   0.3958 &   0.3854 &   0.3745 \\
 200 & 0.30 & 0.9460 & 0.9700 & 0.9660 & 0.9600 & 0.9480 &&   0.3728 &   0.3728 &   0.3668 &   0.3570 &   0.3477 \\
 200 & 0.40 & 0.9360 & 0.9680 & 0.9600 & 0.9460 & 0.9420 &&   0.3641 &   0.3641 &   0.3559 &   0.3467 &   0.3367 \\
 200 & 0.50 & 0.9580 & 0.9760 & 0.9700 & 0.9680 & 0.9600 &&   0.3583 &   0.3583 &   0.3497 &   0.3392 &   0.3288 \\[2pt]
1000 & 0.10 & 0.9420 & 0.9680 & 0.9620 & 0.9480 & 0.9440 &&   0.2158 &   0.2158 &   0.2128 &   0.2076 &   0.2026 \\
1000 & 0.20 & 0.9420 & 0.9620 & 0.9600 & 0.9480 & 0.9420 &&   0.1772 &   0.1772 &   0.1753 &   0.1718 &   0.1684 \\
1000 & 0.30 & 0.9500 & 0.9640 & 0.9580 & 0.9540 & 0.9460 &&   0.1662 &   0.1662 &   0.1640 &   0.1597 &   0.1566 \\
1000 & 0.40 & 0.9460 & 0.9640 & 0.9600 & 0.9560 & 0.9500 &&   0.1584 &   0.1584 &   0.1568 &   0.1532 &   0.1497 \\
1000 & 0.50 & 0.9580 & 0.9780 & 0.9720 & 0.9660 & 0.9660 &&   0.1568 &   0.1568 &   0.1549 &   0.1512 &   0.1475 \\
\bottomrule
\end{tabular}
\begin{tablenotes}[para,flushleft]
\footnotesize{}
Nominal confidence level $1-\alpha=0.95$, 500 simulation replications, 399 bootstrap replications.
\end{tablenotes}
\end{threeparttable}
\end{table}

\Cref{tab:sim} shows the simulated coverage probability and median length of each CI, with the following patterns.
First, as expected, among our four proposed CIs, both CP and median length decrease from left to right: $\CI_2$ is highest (for both CP and length), then $\CI_5$, then $\CI_6^s$, and finally $\CI_6$ has lowest CP and length.
That is, the CIs progressively trade more of $\CI_2$'s ``extra'' CP (above $1-\alpha$) for shorter length.
Second, compared to $\CI_1$, $\CI_2$ has the same length (by construction) and higher CP.
That is, $\CI_2$ makes fewer coverage errors than $\CI_1$, without sacrificing length.
Third, compared to $\CI_1$, $\CI_5$ always has higher CP as well as always having shorter length.
That is, it successfully trades some of $\CI_2$'s extra CP for shorter length.
Fourth, compared to $\CI_1$, $\CI_6^s$ always has weakly higher CP while (by construction) reducing length even more than $\CI_5$.
Fifth, compared to $\CI_1$, $\CI_6$ has weakly higher CP in all but two cases while always achieving the shortest length among all the CIs.
Despite having lower CP than $\CI_1$ in two cases (and undercovering in one of those), $\CI_6$ fixes $\CI_1$'s undercoverage in three other cases and increases CP at least $0.5$ percentage points in yet four other cases.
That is, although $\CI_6$ does not achieve uniformly better CP than the benchmark $\CI_1$, arguably it still improves CP overall, even while reducing length by 5--10\%.

\section{Empirical illustration}
\label{sec:emp}

Like the simulation, our empirical illustration also uses smoothed quantile regression, also with replication code on the first author's website.%
\footnote{\url{https://kaplandm.github.io/}}
We use the 1980 and 1990 Census data from \citet{AngristEtAl2006}, and like their Figure 2A we focus on the coefficient on years of schooling in a log earnings regression that also includes experience and its square; see Section 4 of \citet{AngristEtAl2006} for details about the data.
They use data for U.S.-born men aged 40--49, and we further restrict to Black men.
Like them, we multiply the schooling coefficient by $100$ to get the approximate return to schooling as a percent.
As in the simulation, we use confidence level $1-\alpha=0.95$ and $399$ bootstrap replications.

\begin{table}[htbp]
\centering
\caption{\label{tab:emp}
Confidence intervals for schooling coefficient.}
\sisetup{round-precision=2, table-format=2.2, detect-weight, mode=text}
\begin{threeparttable}
\begin{tabular}[c]{c S[round-precision=2,table-format=1.2]
    >{{[}} 
    S[table-space-text-pre={[}]
    @{,\,} 
    S[table-space-text-post={]}]
    <{{]}} 
    >{{[}} 
    S[table-space-text-pre={[}]
    @{,\,} 
    S[table-space-text-post={]}]
    <{{]}} 
    >{{[}} 
    S[table-space-text-pre={[}]
    @{,\,} 
    S[table-space-text-post={]}]
    <{{]}} 
    S[round-precision=1,table-format=2.1]}
\toprule
&&\multicolumn{2}{c}{}&\multicolumn{2}{c}{}&\multicolumn{2}{c}{}&{$\CI_6/\CI_1$} \\
{Census} & {$\tau$} 
& \multicolumn{2}{c}{$\CI_1$} & \multicolumn{2}{c}{$\CI_5$}
& \multicolumn{2}{c}{$\CI_6$} & {length (\%)} \\
\midrule
1980 & 0.10 &  5.82503 &  9.76257 &  6.20824 & 10.13942 &  6.08002 &  9.88979 & 96.75502 \\
     & 0.50 &  6.64533 &  8.27107 &  6.43086 &  8.04982 &  6.58293 &  8.11378 & 94.16354 \\
     & 0.90 &  4.06973 &  6.11499 &  4.12323 &  6.15116 &  4.15641 &  6.07384 & 93.75004 \\[2pt]
1990 & 0.10 & 12.11619 & 15.96512 & 12.52554 & 16.31641 & 12.43531 & 16.03651 & 93.56360 \\
     & 0.50 & 11.26703 & 12.85505 & 11.22699 & 12.81176 & 11.27685 & 12.80331 & 96.12303 \\
     & 0.90 &  8.31466 & 10.57239 &  8.47758 & 10.72162 &  8.45624 & 10.58884 & 94.45783 \\
\bottomrule
\end{tabular}
\begin{tablenotes}[para,flushleft]
\footnotesize{}
Nominal confidence level $1-\alpha=0.95$, 399 bootstrap replications.
Following Figure 2A of \citet{AngristEtAl2006}, schooling coefficient values are multiplied by $100$.
Samples are U.S.-born Black men aged 40--49 in the Census year shown).
\end{tablenotes}
\end{threeparttable}
\end{table}

\Cref{tab:emp} reports $\CI_1$, $\CI_5$, and $\CI_6$, for both Census years and quantile levels $\tau\in\{0.1,0.5,0.9\}$.
Briefly, although not relevant to our methodology, we note some similarities and differences with Figure 2A of \citet{AngristEtAl2006}, whose sample includes Black men but is dominated by white men.
Both our results show the same upward shift in schooling coefficients from 1980 to 1990, but their results show the $\tau=0.9$ schooling coefficient increasing much faster than (and in 1990 well exceeding) the $\tau\in\{0.1,0.5\}$ coefficients, whereas our results for Black men show a consistently \emph{lower} return to schooling at the $0.9$-quantile level than at lower quantiles.

\Cref{tab:emp} also reports the length of our $\CI_6$ as a percent of the benchmark $\CI_1$, with modest but consistent improvements of $3$--$7$\% that are similar to the $5$--$10$\% improvements from the simulation.
Unfortunately, as usual, we cannot know the true coverage in empirical examples, so we cannot see the other benefit of increased coverage probability shown in the theory and simulations, where $\CI_5$ always had higher coverage probability than $\CI_1$, and $\CI_6$ almost always.

\section{Conclusion}

In practice, given an intentionally biased estimator that has at least as good mean squared error as the corresponding unbiased estimator, as a default choice we recommend $\CI_5$, our CI centered at the biased estimator using the unbiased estimator's standard error and with critical value calibrated to be shorter than a CI using the unbiased estimator's standard error.
This CI is easy to compute, avoids the undercoverage from using the biased estimator's standard error, and for conventional confidence levels is better than the benchmark of using on the unbiased estimator and its standard error, having both shorter length and higher coverage probability.
In some cases our $\CI_6$ is even shorter by using the correlation to solve for the convex combination estimator that minimizes CI length while maintaining coverage.
Our results are not model-specific, so they apply not only to all the estimators cited in the introduction but also any future estimators that introduce bias in order to reduce mean squared error.
If the biased estimator's standard error is not reliably estimated, then using the usual critical value (and the unbiased estimator's standard error) also improves upon other confidence intervals and is even easier to compute in practice.

In future work, it would be valuable to extend these results more precisely to non-normal distributions, specifically those arising in averaging estimators with random weights like in \citet{Hansen2017} and \citet{ChengLiaoShi2019}.
It would also be interesting to consider confidence sets for vector-valued parameters, as in papers going back to \citet{Stein1962}, as well as uniform confidence bands based on function-valued estimators (such as nonparametric regression or smoothed CDF estimators) with asymptotic Gaussian process limits: under what conditions on the bias function $b(\cdot)$ and covariance functions $s_1(\cdot,\cdot)$ and $s_2(\cdot,\cdot)$ would a uniform confidence band have higher coverage probability when centered at the (more) biased estimator?

\section*{Acknowledgments}

We thank Editor Esfandiar Maasoumi and the anonymous associate editor and referees for their help improving this paper, as well as Alyssa Carlson for feedback on multiple drafts.

\singlespacing
\bibliographystyle{jpe}


\appendix
\paperspacing

\section{Additional results and proofs}
\label{sec:app-proofs}







\begin{lemma}
\label{res:shift-in}
Let $a,b,d\in\R$ with $\abs{a}<\abs{b}$ and $d>0$.
Then, with $\Phi(\cdot)$ the standard normal CDF, $\Phi(a+d)-\Phi(a-d)>\Phi(b+d)-\Phi(b-d)$.
\end{lemma}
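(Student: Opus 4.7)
The plan is to define $f(x) \equiv \Phi(x+d) - \Phi(x-d)$ and show it is a strictly decreasing function of $\abs{x}$. The claim then follows by evaluating $f$ at $a$ and $b$.

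First I would verify that $f$ is even. Using $\Phi(-y) = 1 - \Phi(y)$,
\[
f(-x) = \Phi(-x+d) - \Phi(-x-d) = [1 - \Phi(x-d)] - [1 - \Phi(x+d)] = f(x).
\]
So, without loss of generality, I may replace $a$ and $b$ with $\abs{a}$ and $\abs{b}$ and assume $0 \le \abs{a} < \abs{b}$, which reduces the problem to showing that $f$ is strictly decreasing on $[0, \infty)$.

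Next I would differentiate: $f'(x) = \phi(x+d) - \phi(x-d)$, since $\Phi' = \phi$. Because $\phi(y) = (2\pi)^{-1/2} \exp(-y^2/2)$ is strictly decreasing in $\abs{y}$, the sign of $f'(x)$ is determined by comparing $(x+d)^2$ to $(x-d)^2$. The difference is $(x+d)^2 - (x-d)^2 = 4xd$, which is strictly positive for $x > 0$ and $d > 0$. Hence $\phi(x+d) < \phi(x-d)$ for $x > 0$, giving $f'(x) < 0$ on $(0, \infty)$.

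Combining the two observations, $f$ is strictly decreasing on $[0,\infty)$, so $\abs{a} < \abs{b}$ implies $f(\abs{a}) > f(\abs{b})$; together with evenness, this gives $f(a) > f(b)$, which is precisely $\Phi(a+d) - \Phi(a-d) > \Phi(b+d) - \Phi(b-d)$. There is no serious obstacle here since the argument is a short calculus exercise; the only thing worth noting is the initial reduction to the positive half-line via the evenness of $f$, which is what lets a condition on $\abs{a}$ versus $\abs{b}$ (rather than $a$ versus $b$) give the stated inequality.
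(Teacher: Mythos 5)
Your proof is correct and follows essentially the same route as the paper: both reduce to the positive half-line via the evenness/symmetry argument and then rest on the single key fact that $\phi$ is decreasing in the absolute value of its argument. The only cosmetic difference is that you establish monotonicity of $x\mapsto\Phi(x+d)-\Phi(x-d)$ by differentiating, whereas the paper writes the difference of coverage probabilities directly as an integral of $\phi(x+a-d)-\phi(x+a+d)$ over $[0,b-a]$ and notes the integrand is positive --- two equivalent packagings of the same inequality.
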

\begin{proof}
Let $0\le a<b$ without loss of generality because of the symmetry of the standard normal distribution.
That is, if the original $a<0$, we can replace it with $-a$ without changing the probability: using $\Phi(-z)=1-\Phi(z)$,
\begin{align*}
\Phi(a+d)-\Phi(a-d)
  &= [1-\Phi(-(a+d))]-[1-\Phi(-(a-d))]
\\&= \Phi(-a+d)-\Phi(-a-d)
,
\end{align*}
and similarly we could replace any $b<0$ with $-b$.

We want to show the following is positive:
\begin{align*}
&[\Phi(a+d)-\Phi(a-d)]
-[\Phi(b+d)-\Phi(b-d)]
\\&=
 [\Phi(b-d)-\Phi(a-d)]
-[\Phi(b+d)-\Phi(a+d)]
\\&=
 \int_{a-d}^{b-d}\phi(x)\diff{x}
-\int_{a+d}^{b+d}\phi(x)\diff{x}
\\&=
 \int_0^{b-a}\phi(x+a-d)\diff{x}
-\int_0^{b-a}\phi(x+a+d)\diff{x}
\\&=
\int_0^{b-a}\overbrace{[\phi(x+a-d)-\phi(x+a+d)]}^{>0\textrm{ by \cref{eqn:phi-ineq}}}\diff{x}
> 0 .
\end{align*}
Because $\phi(z)$ is decreasing in $\abs{z}$, and $a,x\ge0$ and $d>0$, then $\abs{x+a-d}<\abs{x+a+d}$, so
\begin{equation}
\label{eqn:phi-ineq}
\phi(x+a-d)>\phi(x+a+d),
\end{equation}
making the above expression positive as desired.
\end{proof}

\begin{lemma}
\label{res:CP-w-rho}
Given \cref{eqn:joint,eqn:theta3}, given any $0\le w\le1$ and any $-1\le\rho\le1$, in terms of critical value $z$ and the value $t$ such that $s_2=s_1\cos(t)$ and $b_2=s_1\sin(t)$, the coverage probability of $\hat\theta_{3w}\pm s_1z$ for $\theta$ is given by \cref{eqn:CP-w}.
\end{lemma}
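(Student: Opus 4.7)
The plan is a direct calculation based on the fact that linear combinations of jointly normal random variables remain normal. First, from the joint normality assumption in \cref{eqn:joint}, the convex combination $\hat\theta_{3w} = (1-w)\hat\theta_1 + w\hat\theta_2$ is normally distributed. Its mean is $(1-w)\theta + w(\theta+b_2) = \theta + wb_2$, and its variance (by bilinearity of covariance) is
\[ s_{3w}^2 \equiv (1-w)^2 s_1^2 + w^2 s_2^2 + 2\rho w(1-w) s_1 s_2, \]
which is nonnegative for all admissible $w$ and $\rho$ because it is the variance of a well-defined random variable (equivalently, the $2\times 2$ covariance matrix in \cref{eqn:joint} is positive semi-definite whenever $\abs{\rho}\le 1$).

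Next, I would derive the coverage probability by standardization. Rewriting the coverage event $\theta \in \hat\theta_{3w}\pm s_1 z$ as $-s_1 z \le \hat\theta_{3w}-\theta\le s_1 z$, subtracting the bias $wb_2$, and dividing by $s_{3w}$ transforms the center into the standard normal quantity $(\hat\theta_{3w}-\theta-wb_2)/s_{3w}\sim\NormDist(0,1)$. This yields
\[ \Pr(\theta\in\hat\theta_{3w}\pm s_1 z) = \Phi\!\left(\frac{s_1 z - wb_2}{s_{3w}}\right) - \Phi\!\left(\frac{-s_1 z - wb_2}{s_{3w}}\right). \]

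Finally, I would substitute the parameterization $b_2 = s_1 \sin(t)$ and $s_2 = s_1 \cos(t)$ from \cref{eqn:sin-cos}. This gives $wb_2 = w s_1 \sin(t)$ and
\[ s_{3w} = s_1 \sqrt{(1-w)^2 + w^2 [\cos(t)]^2 + 2\rho w(1-w)\cos(t)}. \]
Factoring $s_1$ out of both the numerator and the denominator inside each $\Phi$ argument causes it to cancel, reproducing \cref{eqn:CP-w} exactly.

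There is no real obstacle here; the result is a routine calculation once one notes that (a) the convex combination is normal with the variance expression above, and (b) the parameterization of \cref{eqn:sin-cos} is inserted verbatim. The only care required is bookkeeping the $s_1$ factors so that the displayed ratio in \cref{eqn:CP-w} comes out with the correct cancellation, and noting that the expression under the square root is guaranteed nonnegative so $s_{3w}$ is well-defined.
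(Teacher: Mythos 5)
Your proposal is correct and follows essentially the same route as the paper: identify $\hat\theta_{3w}$ as normal with mean $\theta+wb_2$ and variance $s_{3w}^2=(1-w)^2s_1^2+w^2s_2^2+2\rho w(1-w)s_1s_2$, standardize to get the two-$\Phi$ coverage expression, and substitute the $\sin$/$\cos$ parameterization so that $s_1$ cancels. The only cosmetic difference is that the paper reuses the already-derived formula in \cref{eqn:CI2-CP-b2} rather than re-deriving the standardization step, which changes nothing substantive.
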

\begin{proof}
Given \cref{eqn:joint,eqn:theta3},
\begin{equation}
\label{eqn:theta3-hat-dist}
\hat\theta_{3w} \sim
\NormDist\bigl( \theta+wb_2, s_{3w}^2 \bigr)
,\quad
s_{3w}^2 \equiv
(1-w)^2s_1^2+w^2s_2^2+2\rho w(1-w)s_1s_2
.
\end{equation}
This parallels the normal distribution of $\hat\theta_2$ in \cref{eqn:theta-hat-dists}, but with bias $wb_2$ (instead of $b_2$) and standard deviation $s_{3w}$ (instead of $s_2$).
Thus, for a given critical value $z$ and $w$, the coverage probability of $\hat\theta_{3w}\pm s_1z$ has the same structure as \cref{eqn:CI2-CP-b2} but now with bias $wb_2$ (instead of $b_2$) and denominator $s_{3w}$ (instead of $s_2$):
\begin{equation*}
\Pr(\hat\theta_{3w} - s_1z \le \theta \le \hat\theta_{3w} + s_1z)
=\Phi\biggl(\frac{ zs_1-wb_2}{s_{3w}}\biggr)
-\Phi\biggl(\frac{-zs_1-wb_2}{s_{3w}}\biggr) .
\end{equation*}
Plugging in for $s_{3w}$ from \cref{eqn:theta3-hat-dist} and using the same equal-MSE parameterization from \cref{eqn:sin-cos} of $b_2=s_1\sin(t)$ and $s_2=s_1\cos(t)$,
\begin{align*}
 \Phi\biggl(\frac{ zs_1-wb_2}{s_{3w}}\biggr)
-\Phi\biggl(\frac{-zs_1-wb_2}{s_{3w}}\biggr)
  &=
 \Phi\biggl(\frac{ zs_1-ws_1\sin(t)}{s_{3w}}\biggr)
-\Phi\biggl(\frac{-zs_1-ws_1\sin(t)}{s_{3w}}\biggr)
\\&=
 \Phi\biggl(\frac{ z-w\sin(t)}{s_{3w}/s_1}\biggr)
-\Phi\biggl(\frac{-z-w\sin(t)}{s_{3w}/s_1}\biggr) ,
\end{align*}
where, noting $s_2/s_1=\cos(t)$,
\begin{align*}
s_{3w}/s_1
  &=
\sqrt{[(1-w)^2s_1^2+w^2s_2^2+2\rho w(1-w)s_1s_2]/s_1^2}
\\&=
\sqrt{(1-w)^2+w^2s_2^2/s_1^2+2\rho w(1-w)s_2/s_1}
\\&=
\sqrt{(1-w)^2+w^2[\cos(t)]^2+2\rho w(1-w)\cos(t)} .
\end{align*}
The expression in \cref{eqn:CP-w} follows.
\end{proof}

\end{document}